\newtheorem{theorem}{Theorem}
\newtheorem{lemma}{Lemma}
\newtheorem{corollary}{Corollary}
\newtheorem{proposition}{Proposition}
\theoremstyle{definition}
\newtheorem{remark}{Remark}
\newtheorem{definition}{Definition}
\newcommand{\floor}[1]{\lfloor{#1}\rfloor}
\DeclareMathOperator{\E}{E}
\DeclareMathOperator{\Var}{Var}
\DeclareMathOperator{\sgn}{sgn}
\DeclareMathOperator{\Law}{Law}
\renewcommand{\hat}{\widehat}
\renewcommand{\tilde}{\widetilde}
\renewcommand{\epsilon}{\varepsilon}
\renewcommand{\P}{\mathrm{P}}
\newcommand{\F}{\mathcal{F}}
\newcommand{\R}{\mathbb{R}}
\newcommand{\bt}{\tilde{b}}
\renewcommand{\b}{\bar b}
\newcommand{\M}{\mathcal{M}}
\let\Ss\S
\renewcommand{\S}{\mathcal{S}}
\newcommand{\cadlag}{c\`adl\`ag}
\title{Asymptotically optimal strategies in a diffusion approximation of a repeated betting game}
\author{Mikhail Zhitlukhin\thanks{Steklov Mathematical Institute of the
Russian Academy of Sciences. 8 Gubkina St., Moscow, Russia. Email:
mikhailzh@mi-ras.ru.}}
\date{26 August 2021}
\begin{document}
\maketitle

\begin{abstract}
We construct a diffusion approximation of a repeated game in which agents
make bets on outcomes of i.i.d.\ random vectors and their strategies are
close to an asymptotically optimal strategy. This model can be interpreted
as trading in an asset market with short-lived assets. We obtain sufficient
conditions for a strategy to maintain a strictly positive share of total
wealth over the infinite time horizon. For the game with two players, we
find necessary and sufficient conditions for the wealth share process to be
transient or recurrent in this model, and also in its generalization with
Markovian regime switching.

\medskip \textit{Keywords:} repeated betting, diffusion approximation,
asymptotic optimality, survival strategies, capital growth, regime
switching.

\medskip
\noindent
\textit{MSC 2010:} 60J70, 91B55. \textit{JEL Classification:} C73.
% MSC    60J70: Applications of Brownian motions and diffusion theory (population genetics...)
%        91B55: Economic dynamics
% JEL    C73: Stochastic and Dynamic Games, Evolutionary Games, Repeated Games
\end{abstract}

\section{Introduction}
In the present paper, we consider a dynamic game-theoretic model in which
agents make bets on outcomes of random events or random variables.
Investigation of this model is motivated by applications in analysis of
asymptotic performance of investment strategies in a multi-agent financial
market. The main aim of the paper is to construct and study a
continuous-time approximation of the model which arises when all agents make
``almost optimal'' bets.

To facilitate the exposition, let us begin with an example. Let
$(\Omega,\F,\P)$ be a probability space and $\{A_i^1,\ldots,A_i^N\}$,
$i=1,2,\ldots\,$, a sequence of independent partitions of $\Omega$ into $N$
random events. Assume that the probabilities $\P(A_i^n)$ are equal for all
$i$. Suppose that at each time $i\ge 0$, an agent bets a proportion
$\lambda_n\in[0,1]$ of her capital on the occurrence of $A_{i+1}^n$, where
$\lambda_1+\ldots+\lambda_N = 1$ (for further analysis it is important that
the whole capital is bet). At time $i+1$ the pool is divided between the
winning bets proportionally to their sizes, changing the distribution of
capital between the agents. We are interested in determining who of the
agents will have more wealth in the long run. In this paper, we consider
only fixed-mix (constant) strategies which are given exogenously; in
particular, they need not to form a Nash equilibrium.

In the above example, it is known that as $i\to\infty$ the entire market
wealth will be held by the agent whose strategy has the smallest
Kullback--Leibler divergence $D(p\,\|\,\lambda) = \sum_n p_n
\ln(p_n/\lambda_n)$ from the distribution $p=(p_1,\ldots,p_N)$, $p_n =
\P(A_i^n)$ \citep{BlumeEasley92}. This model has a natural interpretation of
an asset market consisting of $N$ Arrow securities (instruments with unit
payoffs in only one random state) with endogenous prices determined from
one-period equilibrium of variable asset demand, which depends on agents'
strategies, and fixed asset supply, see Remark~\ref{rem-prices}. Many
generalizations and extensions of this model have been obtained in the
literature. Among works in this direction which are closely related to the
material of the present paper, let us mention the paper of
\cite{EvstigneevHens+02}, in which Arrow securities are replaced with assets
that make random i.i.d.\ payoffs simultaneously. Further extensions were
obtained by \cite{AmirEvstigneev+05} who proved similar results for a model
with Markov payoffs, and \cite{AmirEvstigneev+13} who considered general
payoff sequences. The main results of the mentioned papers consist in
proving the existence of an ``unbeatable'' strategy which allows an agent to
survive in the market in the sense of maintaining a share of the total
market wealth strictly bounded away from zero over the infinite time
horizon. Under some additional conditions, such an agent turns out to be a
single survivor and accumulates in the limit the entire market wealth.

If no agent uses this optimal strategy, there might be several survivors,
even if some agent uses a strategy which is strictly closer to the optimal
strategy than the strategies of the other agents. This possibility depends
in an essential way on whether a market is complete or incomplete. In a
complete market there is always a single survivor, except some uninteresting
cases (the above example with independent partitions is a complete market).
A simple example of coexistence of surviving agents in an incomplete market
is provided by \citet[Ch.~9.3.3]{EvstigneevHens+09}. Let us also mention the
works of \cite{BottazziDindo14,BottazziGiachini17,BottazziGiachini19} who
studied survival and coexistence in a related setting but with agents'
strategies depending on one-step equilibrium asset prices. There is also a
large number of results on selection of agents by market forces in the
framework of general equilibrium, see, for example,
\cite{Sandroni00,BlumeEasley06} and references therein.

In the present paper we are interested in conditions for survival of agents
with fixed-mix strategies in a general (incomplete) market model, and focus
on the situation when strategies of agents are close to an optimal strategy.
The closeness is understood in the sense that we consider series of models
in which agents' strategies converge to the optimal strategy. This allows to
approximate the dynamics of the model by a system of stochastic differential
equations and investigate the solution of this system. From the point of
view of economic modeling, such an approximation is reasonable, since in the
long run we can leave out agents who make ``less correct'' predictions as
their share in the market wealth and influence on the dynamics of the model
will diminish with time. Although we do not obtain formal mathematical
conditions when an agent can be left out from the model, let us mention that
this idea is known in economics since long ago, see, e.g., \cite{Alchian50}
(however later studies show that it is not always applicable, see, e.g.,
\cite{DeLongShleifer+90,BlumeEasley06}).

Analytically, our approximation has an advantage over the pre-limit
discrete-time models, since it is easier to work with an SDE rather than a
recursive sequence defining the dynamics in discrete time. In particular,
this approximation becomes especially convenient in the case of two agents
and allows to thoroughly analyze the asymptotic behavior of the wealth
process.

The main results of the paper are as follows. First we prove the convergence
of the discrete-time model to the continuous-time model driven by a system
of SDEs. Then we obtain sufficient conditions for an agent to dominate or
survive in the continuous-time model. By survival we mean that the limit
superior of her share of total market wealth is strictly positive with
probability 1 as time goes to infinity. By dominance we mean that the limit
of the share of wealth is 1, i.e.\ this agent is a single survivor. These
conditions are obtained for the model with arbitrary number of agents. When
there are only two agents, we can go further and provide necessary and
sufficient conditions for survival and dominance, and, in the case when both
of the agents survive, show that the process of the share of wealth is
recurrent, determine when it is null or positive recurrent and find the
ergodic distribution. The latter result has a tight link with the stochastic
replicator equation of \cite{FudenbergHarris92}.

The paper is organized as follows. In Section~\ref{sec-discrete}, we
describe the discrete-time model and recall the main results known for it in
the literature. In Section \ref{sec-approximation}, we consider series of
discrete-time models and pass to the limit obtaining a continuous-time model
driven by a system of stochastic differential equations.
Section~\ref{sec-main-results} contains the main results about asymptotic
performance of agents' strategies. We first consider the case of many
agents, and then refine the obtained results in the case of two agents.
Illustrations and numerical examples are provided in
Section~\ref{sec-examples}. In Section~\ref{sec-switching}, we study an
extension of the two-agent case in which the market is modeled by the same
SDE but with switching between two regimes. The \ref{appendix} contains a
theorem on convergence in distribution of a discrete-time sequence to a
diffusion process in a form convenient for our purposes.

\section{A discrete-time model}
\label{sec-discrete}
Let $(\Omega,\F,\P)$ be a probability space on which all random variables
will be defined. Equalities and inequalities for random variables will be
understood to hold with probability~1, unless else is stated.

There are $M\ge 2$ agents and $N\ge 2$ assets in the model. The time is
discrete, $i=0,1,2,\ldots$ At each moment of time $i\ge 1$, the assets yield
random payoffs $X_i^n$, $n=1,\ldots,N$, which are divided between the agents
proportionally to the amount of wealth each agent invests in an asset. The
random vectors $X_i=(X_i^1,\ldots,X_i^N)$ are i.i.d.

The wealth of the agents is represented by random sequences $Y_i^m$,
$m=1,\ldots,M$, $i\ge 0$, which are defined inductively as follows. The
initial wealth $ Y_0^m$ is non-random and strictly positive. At each moment
of time, agent $m$ splits the available wealth for investing in the assets
in proportions $\lambda_m = (\lambda_{m1},\ldots,\lambda_{mN})$. The vector
$\lambda_m$ represents the strategy of this agent. The whole wealth is
reinvested, so $\lambda_m$ belongs to the standard $N$-simplex $\Delta_N
=\{\lambda\in \R_+^{N} : \sum_n \lambda^n =1\}$. We consider only constant
strategies which depend neither on time nor on a random outcome. Then the
wealth sequences $Y_i^m$ are defined by the equation (see
Remark~\ref{rem-prices} below for an interpretation)
\begin{equation}
Y_{i+1}^m =  \sum_{n=1}^N \frac{\lambda_{mn}  Y_{i}^m}{\sum_k
\lambda_{kn}  Y_{i}^k} X_{i+1}^n.\label{wealth-discrete}
\end{equation}
We will assume that there is at least one agent who allocates a strictly
positive proportion of wealth in every asset, i.e.\
\[
\lambda_{mn} > 0\ \text{for some $m$ and all $n$}.
\]
Under this conditions, $ Y_i^m>0$ for all $i$, the total market wealth
$\sum_m Y_i^m$ does not depend on the agents' strategies and is equal to
$\sum_n X_i^n$.

We will be interested in the asymptotic behavior of the relative wealth of
agents
\[
R_i^m = \frac{ Y_i^m}{\sum_{k}  Y_i^k}.
\]
It is easy to see that
\[
\frac{ R_{i+1}^m}{ R_i^m} = \sum_{n=1}^N\biggl( \frac{\lambda_{mn}}{\sum_k
\lambda_{kn}  R_i^k} \cdot \frac{X_i^n}{\sum_l X_i^l}\biggr).\label{R-initial}
\]
In particular, the relative wealth does not change under scaling of the
vector $X_i$ and therefore in what follows we will assume that
\[
\sum_{n=1}^N X_i^n =1\ \text{for all $i\ge 1$}, \qquad \sum_{m=1}^M Y_0^m = 1.
\]
Under this assumption we have $ R_i^m= Y_i^m$ and $\sum_m Y_i^m = 1$.

\begin{remark}
\label{rem-prices}
One can interpret the model defined by equation \eqref{wealth-discrete} as
an asset market in which at every moment of time agents can buy $N$ assets
which yield random payoffs at the next moment of time. If their prices
$P_i^n$ are determined from the equality of supply and demand, then agent
$m$ buys $x_i^{mn} = \lambda_{mn} Y_i^m/ P_i^n$ units of asset $n$. On the
other hand, the equality of supply and demand implies that $P_i^m = \sum_m
\lambda_{mn} Y_i^m / S_n$, where $S_n$ is the supply of asset $n$. Hence
agent $m$ will receive the payoff $x_i^{mn} X_{i+1}^n/S_n$, where
$X_{i+1}^n/S_n$ is the payoff per one unit of asset $n$. This gives equation
\eqref{wealth-discrete}.

It should be noted that this model assumes the assets are short-lived in the
sense that they are bought by the agents, yield payoffs at the next moment
of time, then expire and get replaced by new assets (so they live for just
one period). Such assets can used to model standardized contracts, for
example, derivative securities, agreements to produce or deliver goods or
services, etc. The model differs from a usual stock market model in
mathematical finance, see, e.g., \cite{EvstigneevHens+16} for details and
interpretations.

Observe that the model described in the introduction, which corresponds to
the case of a complete market, is obtained if
\begin{equation}
\P(X_i \in \{e_1,\ldots,e_N\}) = 1,\label{complete}
\end{equation}
where 
$e_i=(0,\ldots,1,\ldots,0)$ are the standard basis vectors.
\end{remark}

To motivate further discussion, let us state a result on asymptotically
optimal strategies in the model under consideration. We will say that there
are no redundant assets in the market if there is no non-trivial linear
combination $c_1 X_i^1 + \ldots+c_N X_i^N$ equal to a constant vector with
probability 1.

\begin{proposition}
\label{pr-known-results}
Suppose agent $m$ uses the strategy $\lambda_m = \hat \lambda := (E
X_i^1,\ldots, \E X_i^N)$. Then for any strategies of the other agents it
holds that (with probability 1)
\[
\inf_{i\ge 0} Y_i^m > 0.
\]
If agent $m$ uses a strategy different from $\hat \lambda$, then it is
possible to find strategies $\lambda_k$ of agents $k\neq m$ (one can take
$\lambda_k = \hat \lambda$) such that
\[
\lim_{i\to \infty} Y_i^m = 0.
\]
If there are no redundant assets and at least one agent uses the strategy
$\hat \lambda$, then for any agent $k$ who uses a different strategy it
holds that
\[
\lim_{i\to\infty} Y_i^k = 0.
\]
\end{proposition}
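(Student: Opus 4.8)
The natural object to track is the vector of one-period equilibrium prices $q_i^n := \sum_k \lambda_{kn} R_i^k$, which are exactly the denominators in \eqref{wealth-discrete} and satisfy $\sum_n q_i^n = 1$. In terms of these, the relative wealth evolves multiplicatively, $R_{i+1}^m = R_i^m \rho_i^m$ with $\rho_i^m = \sum_n \lambda_{mn} X_{i+1}^n / q_i^n$, and the whole analysis rests on the conditional expected log-growth $f_m(q) := \E\bigl[\ln \sum_n \lambda_{mn} X^n / q^n\bigr]$, which is $\F_i$-measurable through $q=q_i$ because $X_{i+1}$ is independent of $\F_i$. I would first record the two Jensen estimates that drive everything. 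Writing $\sum_n (\lambda_{mn}/q^n)X^n$ as an average of the numbers $\lambda_{mn}/q^n$ against the random weights $X^n$ (which sum to $1$), concavity of $\ln$ gives, for $\lambda_m = \hat\lambda$ and using $\E X^n = \hat\lambda_n$,
\[
f_{\hat\lambda}(q) \;\ge\; \sum_n \E[X^n]\,\ln\frac{\hat\lambda_n}{q^n} \;=\; D(\hat\lambda\,\|\,q)\;\ge\;0,
\]
while applying concavity in the opposite direction ($\E\ln \le \ln\E$) gives $f_m(\hat\lambda) = \E\bigl[\ln\sum_n (\lambda_{mn}/\hat\lambda_n)X^n\bigr]\le \ln\sum_n \lambda_{mn} = 0$.

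For the first claim, take $\lambda_m = \hat\lambda$ and set $U_i := -\ln R_i^m$. Since total wealth is $1$ we have $R_i^m\le 1$, so $U_i\ge 0$, and $\E[U_{i+1}-U_i\mid\F_i] = -f_{\hat\lambda}(q_i)\le -D(\hat\lambda\,\|\,q_i)\le 0$. Thus $U_i$ is a nonnegative supermartingale (granting the mild integrability needed to treat $\ln R_i^m$ as integrable, which follows from the bound $\rho_i^m \le 1/R_i^m$ together with an assumption such as $\E|\ln X_i^n|<\infty$). By the supermartingale convergence theorem $U_i$ converges a.s.\ to a finite limit, hence $\sup_i U_i<\infty$ a.s., and therefore $\inf_i R_i^m = \exp(-\sup_i U_i) > 0$ a.s., which is exactly $\inf_i Y_i^m>0$.

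For the dominance statement, let agent $1$ use $\hat\lambda$ and write the Doob decomposition of the supermartingale above: $U_i = U_0 - \sum_{j<i} f_{\hat\lambda}(q_j) + M_i$ with $M$ a martingale. Nonnegativity of $U_i$ and of the compensator yields $\E\sum_{j} f_{\hat\lambda}(q_j)\le U_0<\infty$, so $\sum_j D(\hat\lambda\,\|\,q_j)<\infty$ a.s.\ and in particular $q_j\to\hat\lambda$ a.s. Now fix a competitor $k$ with $\lambda_k\neq\hat\lambda$. Continuity of $f_k$ at $\hat\lambda$ gives $f_k(q_j)\to g_k:=f_k(\hat\lambda)\le 0$, and the equality case of the second Jensen estimate forces $\sum_n(\lambda_{kn}/\hat\lambda_n)X^n$ to be a.s.\ constant; with no redundant assets this makes $\lambda_{kn}/\hat\lambda_n$ independent of $n$, i.e.\ $\lambda_k=\hat\lambda$, a contradiction. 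Hence $g_k<0$ strictly. Writing $\ln R_i^k = \sum_{j<i}\bigl(f_k(q_j)\bigr) + \sum_{j<i}\bigl(\ln\rho_j^k - f_k(q_j)\bigr)$, a strong law for the martingale differences makes the second sum $o(i)$ a.s., while the Cesàro averages of $f_k(q_j)$ tend to $g_k<0$; thus $\tfrac1i\ln R_i^k\to g_k<0$ and $R_i^k\to 0$ a.s.

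The second claim follows by specialization: putting $\lambda_k=\hat\lambda$ for all $k\neq m$ makes the competitors aggregate into a single effective agent of strategy $\hat\lambda$ holding relative wealth $1-R_i^m$, so the configuration is a two-agent no-redundancy market to which the dominance argument applies directly, giving $R_i^m = Y_i^m\to 0$. I expect the technical heart of the proof to be the domination step: verifying integrability so that the supermartingale and its Doob decomposition are legitimate, and supplying the variance control needed for the martingale strong law that converts the strictly negative asymptotic drift $g_k<0$ into almost-sure extinction. The role of the no-redundancy hypothesis is concentrated entirely in the strict inequality $g_k<0$; without it the drift can vanish and a suboptimal-looking agent may survive, which is precisely why it is indispensable here but absent from the survival statement.
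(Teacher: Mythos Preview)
The paper does not supply its own proof of this proposition; it cites Evstigneev, Hens and Schenk-Hopp\'e (2002). Your log-wealth supermartingale argument together with the two Jensen estimates is precisely the standard technique from that line of work, so in substance you are reproducing the cited proof.

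There is, however, a genuine gap in your treatment of the second claim. You reduce to a two-agent market with the opponent playing $\hat\lambda$ and then assert that ``the configuration is a two-agent no-redundancy market to which the dominance argument applies directly.'' But the second claim carries \emph{no} no-redundancy hypothesis, and without it your asymptotic drift $g_m=f_m(\hat\lambda)$ can vanish (e.g.\ $N=2$, $X^1=X^2=\tfrac12$ a.s., $\lambda_m=(\tfrac14,\tfrac34)$), so the Ces\`aro/SLLN step yields only $i^{-1}\ln R_i^m\to 0$, which does not decide whether $R_i^m\to 0$. The repair is already contained in your proof of the third claim: the Doob decomposition of the $\hat\lambda$-agent's process $-\ln R_i$ gives $\sum_j D(\hat\lambda\,\|\,q_j)<\infty$ a.s., hence $q_j\to\hat\lambda$. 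In the two-agent reduction one has $q_j-\hat\lambda = R_j^m(\lambda_m-\hat\lambda)$, so $q_j\to\hat\lambda$ together with $\lambda_m\neq\hat\lambda$ forces $R_j^m\to 0$ directly, with no appeal to no-redundancy or to the strict inequality $g_m<0$.
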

These statements were proved by \cite{EvstigneevHens+02}; see also
\cite{AmirEvstigneev+05} for similar results in a model with Markov payoff
sequences, and \cite{AmirEvstigneev+13} for a model with general payoffs.
Observe that in model \eqref{complete}, the strategy $\hat \lambda$ has the
components $\hat\lambda^n = \P(X_i^n=1)$, and therefore is sometimes called
the Kelly strategy after \cite{Kelly56} (the Kelly strategy consists in
``betting one's beliefs'', i.e.\ in proportion to probabilities of
outcomes).

Note that the formula for the optimal strategy $\hat\lambda$ appearing in
Proposition~\ref{pr-known-results} has a relatively simple form largely due
to the assumption that the whole capital is reinvested in the assets. If
agents are allowed to keep part of their wealth not invested (i.e.\ $\sum_n
\lambda_{mn} \le 1$), the optimal strategy will not be constant; see
\cite{DrokinZhitlukhin20,Zhitlukhin21b} for details; an extension to
continuous time can be found in \cite{Zhitlukhin21a,Zhitlukhin20}.

Proposition~\ref{pr-known-results} shows that the strategy $\hat\lambda$
drives other strategies out of the market in the long run, which can be
regarded as a form of asymptotic optimality. However this strategy requires
an agent to have precise estimates of the expected payoffs $\E X_i^n$, which
may be difficult to achieve. In view of that and as discussed in the
introduction, it becomes interesting to consider a model where agents have
``almost'' precise estimates and study it in the limit when the estimation
error vanishes. The next section describes such a model using an appropriate
approximation with diffusion processes.

\section{Approximation by a diffusion process}
\label{sec-approximation}
Consider series of the above discrete-time models indexed by a parameter
$\delta>0$ with asset payoffs and investors strategies satisfying the
relations
\begin{align}
&\E X_i^{\delta,n} = \mu_n + a_n\sqrt \delta + c_n(\delta), \label{EX-series}\\
&\lambda_{mn} = \mu_n + b_{mn}\sqrt \delta + d_{mn}(\delta) , \label{lambda-series}\\
&\mathrm{cov}(X_i^{\delta,n}, X_i^{\delta,l}) = \sigma_{nl} + e_{nl}(\delta), \label{cov-series}
\end{align}
where $\mu_n>0$, $\sum_n \mu_n = 1$, $\sum_n a_n = \sum_n b_{mn} = \sum_n
c_n(\delta) = \sum_n d_{mn}(\delta) = 0$, the matrix $\sigma =
(\sigma_{nl})\in \R^{N\times N}$ is symmetric and non-negative definite, the
functions $c_n(\delta)$, $d_{mn}(\delta)$, $e_{nl}(\delta)$ have the limits
$\lim_{\delta\downarrow0}c_n(\delta)/\sqrt\delta =
\lim_{\delta\downarrow0}d_{mn}(\delta)/\sqrt\delta =
\lim_{\delta\downarrow0} e_{nk}(\delta) = 0$.

Denote by $\tilde Y_i^{\delta} = (\tilde Y_i^{\delta,1},\ldots,\tilde
Y_i^{\delta,M})$ the wealth sequences of the agents (tildes will be used in
the notation to distinguish discrete-time objects). The vector of initial
wealth $\tilde Y_0$ is assumed to be the same for all $\delta$, with $\tilde
Y_0^m>0$ for all $m$. Denote by $Y_t^{\delta}$ the piecewise-constant
embedding of $\tilde Y_i^{\delta}$ into continuous time with step $\delta$,
i.e.\
\[
Y_t^{\delta} = \tilde Y^{\delta}_{\lfloor t/\delta \rfloor}.
\]

The next theorem contains the main result about the convergence of the
discrete-time models to a continuous-time model. Everywhere the convergence
will be understood as the weak convergence of distributions on the Skorokhod
space. Let $\bar b(y)\colon \Delta_M \to \Delta_N$ denote the weighted
coefficient $b$ of the strategies of the agents with a vector of weights
$y$, i.e.
\[
\b_n(y) = \sum_{m=1}^M y^m b_{mn}.
\]
Consider the system of $M$ stochastic differential equations
($m=1,\ldots,M$)
\begin{equation}
d Y^m_t = Y_t^m\sum_{n=1}^N \frac{1}{\mu_n}\biggl( (b_{mn} -
\b_n(Y_t))(a_n - \b_n(Y_t)) dt + (b_{mn} - \b_n(Y_t)) d W^n_t\biggr),
\label{sde}
\end{equation}
where $W_t^n$ are correlated Brownian motions with zero mean and covariance
\begin{equation}
\E(W_t^n W_t^l) = \sigma_{nl} t.\label{cov-B}
\end{equation}

\begin{theorem}
\label{th-limit}
Equation \eqref{sde} has a unique strong solution $Y$ for any initial
condition $Y_0 \in \Delta_N$ and $\Law(Y^\delta_t,\ t\ge 0) \to \Law(Y_t,\
t\ge0)$ as $\delta\to 0$.
\end{theorem}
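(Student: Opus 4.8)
The plan is to establish two claims separately: (1) existence and uniqueness of a strong solution to the SDE system $\eqref{sde}$ on the simplex, and (2) weak convergence of the piecewise-constant embeddings $Y^\delta$ to this solution. For the first claim, the coefficients of $\eqref{sde}$ are polynomial in $Y_t$ (the drift and diffusion coefficients are smooth functions of $Y_t$ through $\bar b_n(Y_t) = \sum_m Y_t^m b_{mn}$), so they are locally Lipschitz. The standard Itô existence-uniqueness theorem then gives a unique strong solution up to an explosion time. The work is to rule out explosion and to show the solution stays in the simplex $\Delta_M$.

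**First I would verify the invariance of the simplex.** Two things must be checked: that the total wealth $\sum_m Y_t^m$ is conserved, and that each coordinate stays nonnegative. Summing $\eqref{sde}$ over $m$, the aggregate drift and diffusion terms should telescope using $\sum_m Y_t^m (b_{mn} - \bar b_n(Y_t)) = \bar b_n(Y_t) - \bar b_n(Y_t) = 0$ (since $\sum_m Y_t^m = 1$), giving $d(\sum_m Y_t^m) = 0$, so the constraint $\sum_m Y_t^m = 1$ is preserved. For nonnegativity, I would observe that $\eqref{sde}$ is linear in $Y_t^m$ with the coefficient of $Y_t^m$ bounded on the (compact) simplex, so each coordinate has the form of a stochastic exponential and remains strictly positive if it starts positive; boundaries where some $Y_t^m = 0$ are not crossed. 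Compactness of $\Delta_M$ together with the linear-growth bound then also precludes explosion, giving a global solution and proving claim (1).

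**The main work is the weak convergence**, and for this I would invoke the diffusion-approximation theorem stated in the appendix. The strategy is to write the one-step increments of the discrete process and match them to the infinitesimal generator of $\eqref{sde}$. Starting from the discrete recursion $\eqref{wealth-discrete}$ (in relative-wealth form), I would compute the conditional first and second moments of the increment $\tilde Y_{i+1}^\delta - \tilde Y_i^\delta$ given $\tilde Y_i^\delta$, substitute the expansions $\eqref{EX-series}$--$\eqref{cov-series}$, and perform a Taylor expansion in $\sqrt\delta$. The key computation is to show that, after dividing by $\delta$, the conditional mean increment converges to the drift coefficient $Y^m \sum_n \mu_n^{-1}(b_{mn} - \bar b_n(Y))(a_n - \bar b_n(Y))$ and the conditional covariance converges to the diffusion coefficient built from $Y^m Y^{m'} \sum_{n,l} \mu_n^{-1}\mu_l^{-1}(b_{mn} - \bar b_n(Y))(b_{m'l} - \bar b_l(Y))\sigma_{nl}$, with higher-order terms vanishing. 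The appearance of the $1/\mu_n$ factors should emerge from expanding the denominators $\sum_k \lambda_{kn} R_i^k$ around $\mu_n$.

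**The hard part will be the moment expansion**: the increments are nonlinear functions of $X_{i+1}$ through the ratio structure in $\eqref{wealth-discrete}$, so I must Taylor-expand the denominator $\sum_k \lambda_{kn} \tilde Y_i^k$ carefully to order $\sqrt\delta$ and track how the $O(\sqrt\delta)$ terms in $\eqref{EX-series}$ and $\eqref{lambda-series}$ combine. I expect the leading $O(\sqrt\delta^0)$ terms to cancel (since all strategies and expected payoffs agree at $\mu$ to leading order), the $O(\sqrt\delta)$ terms to contribute the martingale/diffusion part, and the $O(\delta)$ terms to produce the drift after accounting for the quadratic contribution of the noise. I would also need a Lyapunov-type or tightness bound to control the remainder terms $c_n, d_{mn}, e_{nl}$ uniformly, and to verify the negligibility condition on large jumps required by the appendix theorem; the latter follows since on the simplex the increments are $O(\sqrt\delta)$ in probability. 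Once the coefficients match and the remainder estimates are in place, the appendix theorem delivers the claimed weak convergence.
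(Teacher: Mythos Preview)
Your proposal is correct and follows essentially the same route as the paper: establish simplex invariance of the SDE, deduce global existence and uniqueness, then verify the moment-matching conditions of the appendix convergence theorem via Taylor expansion of the discrete increments. The only notable technical difference is that the paper handles existence by first modifying the drift and diffusion coefficients outside $\Delta_M$ to be smooth with compact support (so Itô's theorem applies globally at once), whereas you argue directly via local Lipschitz coefficients plus a no-explosion argument from compactness; both devices work and lead to the same verification of conditions \ref{conv-1}--\ref{conv-6}.
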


Before giving a proof, which will be based on verification of some technical
conditions, let us provide an intuitively clear (but not formally rigor)
argument explaining why one can expect equation \eqref{sde} in the limit.
Expanding \eqref{wealth-discrete} in the Taylor series up to order $\delta$,
we get
\[
\begin{split}
\tilde Y_{i+1}^{\delta,m} &\approx
\tilde Y_i^{\delta,m}\biggl\{ 1 +
\sqrt\delta \sum_{n=1}^N \frac1{\mu_n}(b_{mn} - \b_n(\tilde Y_i^{\delta}))
  (X_{i+1}^{\delta,n} - \E X_{i+1}^{\delta,n}) \\
&\qquad+ \delta \sum_{n=1}^N \frac1{\mu_n} (b_{mn} -\b_n(\tilde Y_i^{\delta}))
  (a_n - \b_n(\tilde Y_i^{\delta})) \\&\qquad+
  \sum_{n=1}^N \biggl(\frac{d_{mn}(\delta)}{\mu_n} +
    \frac{\delta}{\mu_n^2}\b_n(\tilde Y_i^{\delta})(\b_n(\tilde Y_i^{\delta}) - b_{mn})\biggr)
    (X_{i+1}^{\delta,n} - \E X_{i+1}^{\delta,n}) \biggr\}\\ &:=
\tilde Y_i^{\delta,m}\{1 + \sqrt\delta A_{i+1}^{\delta} + \delta B_{i+1}^{\delta} +
 C_{i+1}^{\delta} \}.
\end{split}
\]
By Chebyshev's inequality, one can see that for any $t>0$ we have
$\sum_{i<\lfloor t/\delta \rfloor} \tilde Y_i^{\delta,m}C_{i+1}^\delta \to
0$ in probability as $\delta\to 0$. Hence for the embedding in continuous
time we get
\begin{equation}
\begin{split}
Y_t^{\delta,m} &= \tilde Y_{\lfloor t/\delta\rfloor}^{\delta,m} \approx 
\delta \sum_{i<\lfloor t/\delta\rfloor} \tilde Y_i^{\delta,m}
  \sum_{n=1}^N \frac1{\mu_n} (b_{mn} -\b_n( \tilde Y_i^{\delta}))
    (a_n - \b_n(\tilde Y_i^{\delta})) \\ &+
\sqrt\delta\sum_{i<[t/\delta]}\tilde Y_i^{\delta,m}
  \sum_{n=1}^N \frac{1}{\mu_n}(b_{mn} - \b_n(\tilde Y_i^{\delta})) (X_{i+1}^{\delta,n} - \E X_{i+1}^{\delta,n}) 
:= \delta F_t^\delta + \sqrt\delta G_t^\delta.
\end{split}
\label{FG}
\end{equation}
Suppose there exists the limit process $Y_t = \lim_{\delta\to 0}
Y_t^\delta$. Then we can approximate $\tilde Y^{\delta,m}_i$ with
$Y^m_{\delta i}$ in \eqref{FG}, which implies
\[
\delta F_t^\delta \approx \sum_{n=1}^N \int_0^t  \frac{Y^m_s}{\mu_n} (b_{mn} -
\b_n(Y_s))(a_n - \b_n(Y_s)) ds.
\]
As for the term $\sqrt\delta G_t^\delta$, since $X_i^\delta$ is a sequence
of i.i.d.\ vectors, we can approximate it with a sequence of increments of a
multidimensional Brownian motion with appropriate correlation matrix.
Namely,
\[
\sqrt\delta G_t^\delta\approx
\sum_{n=1}^N \int_0^t  \frac{Y^m_s}{\mu_n} (b_{mn} - \b_n(Y_s)) d W^n_s,
\]
where $W_t^n$ are Brownian motions, which, in view of \eqref{cov-series},
satisfy \eqref{cov-B}.

\begin{proof}[Proof of Theorem~\ref{th-limit}]
It is easy to see that for any initial condition $Y_0\in \Delta_M$, a
solution of \eqref{sde} must always stay in $\Delta_M$. Indeed, $Y_t$ is
clearly non-negative. The equality $\sum_m Y_t^m = 1$ follows from that
$\sum_m y^m (b_{mn} - \b_n(y)) = \b_n(y)(1-\bar y)$, where $\bar y = \sum_m
y^m$, and hence
\[
d \bar Y_t = \sum_{n=1}^N \frac1{\mu_n}{\b_n(Y_t)(1-\bar Y_t)}(  (a_n-\b_n(Y_t))  dt +
 dB_t^n).
\]
So, if $\bar Y_0 = 1$, then $\bar Y_t = 1$ for all $t\ge 0$.

Consequently, without loss of generality, the coefficients of \eqref{sde}
can be modified outside of $\Delta_M$ and replaced with functions $f(y)$ and
$g(y)$ (the drift and diffusion coefficients, respectively) which are smooth
on $\R^M$, have a bounded support, and for $y\in\Delta_M$ \[ f^m(y) =
y^m\sum_{n=1}^N \frac1{\mu_n} (b_{mn} - \b_n(y))(a_n - \b_n(y)),\qquad
g^m(y) = y^m\sum_{n=1}^N\frac1{\mu_n}(b_{mn} - \b_n(y)).
\]
The existence and uniqueness of a strong solution of \eqref{sde} follows
from classical Ito's theorem. To prove the convergence of distributions, we
will apply Proposition~\ref{th-convergence} from the \ref{appendix}.
Conditions \ref{conv-1}, \ref{conv-2} of this proposition are clearly met.
Condition \ref{conv-3} holds with the function $F(t) = t \max_y (\sum_m
|f^m(y)| + \mathrm{tr}(g(y)\sigma g(y)^T))$.

Let us check \ref{conv-4}. Since $\tilde Y_i^\delta = Y^\delta_{i\delta}$ is
a homogeneous Markov sequence, we can find functions $f^\delta(y)\colon
\R^M\to \R^M$ such that
\[
\E(\Delta Y_{i\delta}^{\delta} \mid \F_{(i-1)\delta}^\delta)
=  f^\delta(Y^\delta_{(i-1)\delta}),
\]
where $\F_{t}^\delta = \sigma(Y_s^\delta;\; s\le t)$. For $\alpha \in
D(\R_+; \Delta_M)$, define
\[
B_t^\delta(\alpha) =  \sum_{1\le i \le \floor{t/\delta}}
 f^\delta(\alpha^\delta_{(i-1)\delta}), \qquad
B_t(\alpha) = \int_0^t f(\alpha_s) ds,
\]
so that $B_t^\delta(Y^\delta)$ and $B_t(Y)$ are the first predictable
characteristics of the processes $Y_t^\delta$ and $Y_t$ (see
\eqref{B-delta}, \eqref{B-C}). It will be enough to show that for any
sequence of functions $\alpha^\delta \in D(\R_+; \Delta_M)$ which are
piecewise-constant on intervals $[i\delta,(i+1)\delta)$ and any $t\ge0$ we
have
\begin{equation}
\lim_{\delta\to 0}\sup_{s\le t} \|B_s^\delta(\alpha^\delta) - B_s(\alpha^\delta)\|= 0.\label{b-conv}
\end{equation}
To compute $f^\delta$, observe that \eqref{wealth-discrete} and
\eqref{lambda-series} imply (with $\bar d_n(y) = \sum_m d_{mn} y^m$)
\begin{equation}
\begin{split}
\Delta Y_{i\delta}^{\delta,m} &=
Y_{(i-1)\delta}^{\delta,m} \biggl(\sum_{n=1}^N
  \frac{\mu_n+b_{mn}\sqrt \delta + d_{mn}(\delta)}{\mu_n + \b_{n}(Y_{(i-1)\delta}^{\delta})
\sqrt\delta + \bar d_n(Y_{(i-1)\delta}^{\delta})(\delta)} X_{i}^{\delta,n} -
1\biggr) \\ &=
Y_{(i-1)\delta}^{\delta,m}\sum_{n=1}^N \biggl( \frac{\sqrt\delta}{\mu_n}(b_{mn} - \b_n(Y_{(i-1)\delta}^\delta)) +
  \frac{\delta}{\mu_n^2}(\b_n^2(Y_{(i-1)\delta}^\delta) - b_{mn}\b_n(Y_{(i-1)\delta}^\delta)) \\
&\phantom{=}+\frac{1}{\mu_n}( d_{mn}(\delta) - \bar d_n(Y_{(i-1)
  \delta}^\delta)(\delta))\biggr)
  X_i^{\delta,n} + \delta \rho^m(\delta; i),
\end{split}
\label{delta-y}
\end{equation}
where $\rho(\delta; i) =\rho(\omega;\delta;i)$ is a family of vectors in
$\R^M$ with $\mathrm{ess\,sup}_\omega \sup_{i\ge 1} \|\rho(\delta;i)\| \to
0$ as $\delta\to0$. The above equation was obtained by expanding the
denominator in Taylor series and using the relation $\sum_n
X_i^{\delta,n}=1$. Consequently, using \eqref{EX-series} we find
\[
f^{\delta,m}(y) = 
\delta y^m\sum_{n=1}^N \frac1{\mu_n}(b_{mn} - \b_n(y))(a_n- \b_n(y)) + \delta \rho^m(\delta;y), 
\]
where $\rho(\delta;y)$ is a function with values in $\R^M$ such that
$\rho(\delta;y)\to 0$ uniformly in $y\in\Delta_M$ as $\delta \to 0$. Here we
used the relation $\sum_n b_{mn} = \sum_n \b_n(y) = \sum_n d_{mn} = \sum_n
\bar d_n(y)=0$. Thus, for any $t=j\delta $, we have
\[
\|B_t^\delta(\alpha^\delta) - B_t(\alpha^\delta)\| \le
\delta \sum_{1\le i \le j} \|\rho(\delta; \alpha^\delta_{(i-1)\delta})\|,
\]
which implies \eqref{b-conv}.

In a similar way, we can verify condition \ref{conv-5}. Let
$g^{\delta}\colon \R^M \to \R^{M\times M}$ be defined by
\[
\begin{split}
g^{\delta,mk}(y) &= 
\E(\Delta Y_{i\delta}^{\delta,m} \Delta Y_{i\delta}^{\delta,k} \mid 
Y_{(i-1)\delta}^\delta = y) \\&\quad- \E(\Delta Y_{i\delta}^{\delta,m} \mid Y_{(i-1)\delta}^\delta = y)
\E(\Delta Y_{i\delta}^{\delta,k} \mid Y_{(i-1)\delta}^\delta = y) \\ &=
\delta y^my^k \sum_{n,l=1}^N \frac{\sigma_{nl}}{\mu_n\mu_l}(b_{mn} - \b_n(y))(b_{kl} - \b_l(y))
 + \delta \rho^{mk}(\delta;y),
\end{split}
\]
where $\rho(\delta;y)$ is (another) function converging to 0 uniformly in
$y\in\Delta_M$ as $\delta\to0$. The second modified predictable
characteristics of $Y_t^\delta$ and $Y_t$ are, respectively,
$C_t^\delta(Y^\delta)$ and $C_t(Y)$, where
\[
C_t^\delta(\alpha) = \sum_{1\le i \le \floor{t/\delta}}
 g^\delta(\alpha^\delta_{(i-1)\delta}), \qquad
C_t(\alpha) = \int_0^t
g(\alpha_s)\sigma g(\alpha_s)^T\, ds.
\]
Hence for $t=j\delta$ and $\alpha^\delta \in D(\R_+; \Delta_M)$ which are
piecewise-constant on $[i\delta,(i+1)\delta)$ we have
\[
\|C_t^\delta(\alpha^\delta) - C_t(\alpha^\delta)\| \le
\delta\sum_{1\le i \le j}  \|\rho(\delta, \alpha^\delta_{(i-1)\delta})\| \to 0,
\]
which gives \ref{conv-5}.

Finally, condition \ref{conv-6} holds because if $h(y)$ is a function
vanishing in a neighborhood of zero, then for sufficiently small $\delta$
all the jumps $\Delta Y_{i\delta}^\delta$, $i\ge 0$, lie in such a
neighborhood with probability 1, see \eqref{delta-y}.
\end{proof}

\section{Asymptotic relative performance of strategies}
\label{sec-main-results}
\subsection{General results for an arbitrary number of agents}
In the rest of the paper we will work within the continuous-time model
obtained in the previous section and identify agents' strategies with
vectors $b_m=(b_{m1},\ldots,b_{mN})^T$ from \eqref{lambda-series}. Denote
also $a=(a_1,\ldots,a_N)^T$, where $a_n$ are the coefficient from
\eqref{EX-series}. We will be primarily interested in the relative
performance of strategies as $t\to\infty$.

\begin{definition}
We shall say that in a strategy profile $b=(b_1,\ldots, b_M)$ with initial
wealth $Y_0 = (Y_0^1,\ldots,Y_0^M)$ agent $1$
\begin{itemize}[leftmargin=*,label=$-$,itemsep=0.0em]
\item vanishes if $\lim\limits_{t\to\infty} Y_t^1 = 0$ a.s.;
\item survives\footnote{In some works survival means $\liminf\limits_{t\ge0}
Y_t^1 > 0$.} if $\limsup\limits_{t\to\infty} Y_t^1 > 0$ a.s.;
\item dominates if $\lim\limits_{t\to\infty} Y_t^1  = 1$ a.s.
\end{itemize}
\end{definition}

The next theorem provides sufficient conditions for an agent to survive,
dominate, or vanish. For brevity of notation, introduce the matrices
\begin{equation}
\M = \mathrm{diag}\biggl(\frac1{\mu_1},\ldots,\frac1{\mu_N}\biggr), \qquad \S =
\biggl(\frac{\sigma_{nl}}{\mu_n\mu_l}\biggr)_{n,l=1}^N.\label{matrices}
\end{equation}

\begin{theorem}
\label{th-many-agents}
Fix a strategy profile $(b_1,\ldots, b_M)$ and a vector of initial wealth
$(Y_0^1,\ldots,Y_0^M)$ with $Y_0^1>0$. Let $B =
\mathrm{conv}(b_2,\ldots,b_M)$ denote the convex hull of the strategies of
agents $m\ge 2$. Define the coefficients
\begin{align}
&\theta_0 = \inf_{\tilde b \in B}\Bigl( (a-b_1)^T\M(b_1-\tilde b) + \frac12(b_1-\tilde b)^T
(2\M-\S)(b_1-\tilde b)\Bigr),\label{d0}\\
&\theta_1 = \inf_{\tilde b \in B}\Bigl( (a-b_1)^T\M(b_1-\tilde b) + \frac12
(b_1-\tilde b)^T \S(b_1-\tilde b) \Bigr).\label{d1} 
\end{align}
Then $\theta_0\ge \theta_1$ and the following statements are true:
\begin{enumerate}[leftmargin=*,label=(\alph*)]
\item\label{th-many-agents-1} if $\theta_0 > 0$ or $\theta_1\ge 0$, then agent
$1$ survives;
\item\label{th-many-agents-2} if $\theta_1>0$, then agent $1$ dominates;
\item\label{th-many-agents-3} if $(a-b_1)^T\M(b_1-b_m) \ge 0$ for
$m=2,\ldots,M$, then there exists $\lim_{t\to\infty} Y_t^1 > 0$ a.s., and,
in particular, agent 1 survives.
\end{enumerate}
\end{theorem}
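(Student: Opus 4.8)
The plan is to derive the conclusion of \ref{th-many-agents-3} from two independent consequences of its hypothesis $(a-b_1)^T\M(b_1-b_m)\ge0$, $m\ge2$: that $Y^1$ is a bounded submartingale (so its limit exists a.s.), and that $\theta_1\ge0$ (so part \ref{th-many-agents-1} applies and agent $1$ survives). Combined, these force the limit to exist and to be strictly positive.

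First I would establish the submartingale property. By \eqref{sde} the drift of $Y_t^1$ is $Y_t^1\,\Phi_0(Y_t)$ with $\Phi_0(y)=(b_1-\b(y))^T\M(a-\b(y))$, where $\b(y)=(\b_1(y),\dots,\b_N(y))^T$. Since $\sum_m y^m=1$ on $\Delta_M$, the $m=1$ term drops out and $b_1-\b(y)=\sum_{m\ge2}y^m(b_1-b_m)$; writing $a-\b(y)=(a-b_1)+(b_1-\b(y))$ then gives
\[
\Phi_0(y)=\sum_{m\ge2}y^m\,(a-b_1)^T\M(b_1-b_m)+(b_1-\b(y))^T\M(b_1-\b(y)).
\]
The first sum is non-negative by the hypothesis (using that $\M$ is symmetric) together with $y^m\ge0$, and the second term is non-negative because $\M$ is positive definite. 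Hence $\Phi_0\ge0$ on $\Delta_M$. As the integrand of the stochastic part is bounded on $\Delta_M$, that integral is a genuine martingale on every $[0,T]$, so $Y^1$ is a true, bounded submartingale and converges a.s.\ by the submartingale convergence theorem.

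Next I would check $\theta_1\ge0$. For any $\bt\in B$, write $\bt=\sum_{m\ge2}\alpha_m b_m$ with $\alpha_m\ge0$, $\sum_m\alpha_m=1$, so that $b_1-\bt=\sum_{m\ge2}\alpha_m(b_1-b_m)$. Then the first term in \eqref{d1} equals $\sum_{m\ge2}\alpha_m(a-b_1)^T\M(b_1-b_m)\ge0$ by hypothesis, while $\tfrac12(b_1-\bt)^T\S(b_1-\bt)\ge0$ since $\S$ is non-negative definite; taking the infimum over $\bt\in B$ gives $\theta_1\ge0$. By part \ref{th-many-agents-1}, agent $1$ survives, i.e.\ $\limsup_{t\to\infty}Y_t^1>0$ a.s.

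Finally, since $\lim_{t\to\infty}Y_t^1$ exists a.s.\ and $\limsup_{t\to\infty}Y_t^1>0$ a.s., the two agree and $\lim_{t\to\infty}Y_t^1>0$ a.s. The main obstacle is exactly this positivity: submartingale convergence by itself only yields $\E[\lim_t Y_t^1]\ge Y_0^1>0$, which gives a positive limit with positive probability but not almost surely. Upgrading to an almost-sure statement is precisely what the survival bound from part \ref{th-many-agents-1} supplies, so the key conceptual point is to notice that these two almost-sure facts must be obtained separately and then reconciled.
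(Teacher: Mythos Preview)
Your argument for \ref{th-many-agents-3} is correct but follows a different route from the paper's. The paper applies It\^o's formula to $\ln Y_t^1$, obtaining the drift
\[
(1-Y_t^1)\Bigl[(a-b_1)^T\M(b_1-\bt_t)+\tfrac12(1-Y_t^1)(b_1-\bt_t)^T(2\M-\S)(b_1-\bt_t)\Bigr],
\]
and invokes Lemma~\ref{lem-inequality} to make the quadratic form in $2\M-\S$ non-negative. Thus $\ln Y_t^1$ is a submartingale bounded above, so it converges to a finite limit, yielding existence and strict positivity of $\lim_{t\to\infty} Y_t^1$ in a single stroke. You instead show that $Y_t^1$ itself is a bounded submartingale---a step that needs only the positive definiteness of $\M$, not Lemma~\ref{lem-inequality}---and then rule out a zero limit via the separate observation $\theta_1\ge 0$ combined with part~\ref{th-many-agents-1}. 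The paper's route is more self-contained for this claim; yours is more elementary at the submartingale step and surfaces the useful intermediate fact $\theta_1\ge 0$ under the hypothesis of~\ref{th-many-agents-3}, at the cost of relying on~\ref{th-many-agents-1} (whose proof, through $\theta_0\ge\theta_1$, still ultimately rests on Lemma~\ref{lem-inequality}).
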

\begin{remark}
As will be seen from the proof of the theorem, the coefficients $\theta_0$
and $\theta_1$ give lower bounds for the drift coefficient of the process
$\ln (Y_t^1/(1-Y_t^1))$ when $Y_t^1$ is close to 0 or 1, respectively. See
\eqref{Zt} below.

Observe that we have $\theta_0=\theta_1$ if the matrix $\S$ has the form
\begin{equation}
\label{cont-complete}
\S_{nl} = -1\ \text{for}\ n\neq l, \qquad \S_{nn} = \frac{1}{\mu_n}-1,
\end{equation}
because in this case $\M-\S$ is the matrix of all units, so
$(b_1-\bt)^T(\M-\S)(b_1-\bt) = (\sum_n (b_{1n}-\bt_n))^2 = 0$, which implies
$\theta_0=\theta_1$. In particular, \eqref{cont-complete} takes place if the
pre-limit discrete-time models represent a complete market in the sense of
\eqref{complete}.
\end{remark}

In order to prove Theorem~\ref{th-many-agents}, we will first establish the
following auxiliary inequality of a general nature.

\begin{lemma}
\label{lem-inequality}
Let $X$ be a random vector in $\Delta_N$, $\mu_n = \E X_n>0$, $\sigma_{nl} =
\mathrm{cov}(X_n,X_l)$, and define $\M,\S$ as in \eqref{matrices}. Then for
any $c\in \R^N$ such that $\sum_{n} c_n = 0$ we have
\begin{equation}
c^T(\M-\S)c \ge 0.\label{m-s-ineq}
\end{equation}
\end{lemma}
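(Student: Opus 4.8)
The plan is to reformulate both quadratic forms probabilistically and reduce the inequality to a pointwise application of Jensen's inequality, exploiting that $X$ takes values in the simplex $\Delta_N$.

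First I would expand the $\S$-form. Writing $\sigma_{nl} = \E(X_n X_l) - \mu_n\mu_l$ and substituting into $\S$ gives
\[
c^T\S c = \sum_{n,l=1}^N \frac{c_n c_l}{\mu_n\mu_l}\bigl(\E(X_n X_l) - \mu_n\mu_l\bigr)
= \E\Bigl[\Bigl(\sum_{n=1}^N \tfrac{c_n}{\mu_n}X_n\Bigr)^2\Bigr] - \Bigl(\sum_{n=1}^N c_n\Bigr)^2 .
\]
Here the hypothesis $\sum_n c_n = 0$ is exactly what makes the last term vanish, so that $c^T\S c$ equals the second moment of the linear functional $\xi := \sum_n (c_n/\mu_n)X_n$; note that $\E\xi = \sum_n c_n = 0$, so this second moment is in fact $\Var\xi$.

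Next, writing $w_n := c_n/\mu_n$, the remaining form is $c^T\M c = \sum_n c_n^2/\mu_n = \sum_n \mu_n w_n^2$, and the desired inequality becomes
\[
\sum_{n=1}^N \mu_n w_n^2 \;\ge\; \E\Bigl[\Bigl(\sum_{n=1}^N w_n X_n\Bigr)^2\Bigr].
\]
The key observation is that since $X\in\Delta_N$, for every outcome the vector $(X_1,\ldots,X_N)$ is itself a probability distribution on the index set $\{1,\ldots,N\}$: its entries are non-negative and sum to $1$. Hence $\sum_n w_n X_n$ is, pathwise, a convex combination of the numbers $w_1,\ldots,w_N$, and convexity of $t\mapsto t^2$ yields the pointwise bound $(\sum_n w_n X_n)^2 \le \sum_n X_n w_n^2$. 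Taking expectations and using $\E X_n = \mu_n$ gives the displayed inequality, and therefore $c^T(\M-\S)c\ge 0$.

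I do not anticipate a genuine obstacle: the entire content is the probabilistic rewriting of $\S$ (where the constraint $\sum_n c_n = 0$ must be used to discard the mean term) together with the pointwise Jensen step, which is available precisely because membership of $X$ in the simplex turns $(X_n)$ into a random probability weight vector. The only point requiring care is to keep the two uses of the structure separate: $\sum_n c_n = 0$ for the first form, and $X_n\ge 0,\ \sum_n X_n = 1$ for the convexity step.
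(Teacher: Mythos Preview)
Your proof is correct. The probabilistic rewriting of $c^T\S c$ as $\E\bigl[(\sum_n w_n X_n)^2\bigr]$ (using $\sum_n c_n=0$) and of $c^T\M c$ as $\sum_n \mu_n w_n^2$ is exactly right, and the pointwise Jensen step $(\sum_n X_n w_n)^2 \le \sum_n X_n w_n^2$ is valid because $X\in\Delta_N$; taking expectations finishes the argument.

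The paper's proof rests on the same two ingredients---the identity $c^T(\M-\S)c = c^T\M c - \E(c^T\M X)^2$ and Jensen's inequality on $\Delta_N$---but packages the second one inside a linear-programming duality argument: it first reduces to discrete $X$, sets up the minimization of $c^T\M c - \sum_k (c^T\M x_k)^2 p_k$ over probability vectors $p$ with prescribed mean, and then exhibits the dual feasible point $q_n = c_n^2/\mu_n^2$, $q_{N+1}=0$, whose feasibility constraint $\sum_n x_{kn} w_n^2 \ge (\sum_n x_{kn} w_n)^2$ is precisely your Jensen step. Your route bypasses the discretization and the LP machinery entirely, applying Jensen directly and taking expectations; it is shorter and loses nothing in rigor or generality.
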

\begin{proof}
Since any distribution can be approximated by a discrete one, it is
sufficient to prove the lemma in the case when $X$ has a discrete
distribution.

Fix a set $\{x_1,\ldots,x_K\} \subset \Delta_N$. We are going to show that
inequality \eqref{m-s-ineq} holds true for any distribution
$p=(p_1,\ldots,p_K)$, $p_k = \P(X=x_k)$, such that $\E X_n=\mu_n$,
$\mathrm{cov}(X_n,X_l)=\sigma_{nl}$. Observe that
\[
c^T(\M-\S)c = c^T\M c - \E(c^T\M X)^2 = c^T\M c - \sum_{k=1}^K (c^T\M x_k)^2 p_k.
\]
Consider the following linear programming problem with variables
$p_1,\ldots,p_K$:
\begin{align}
\text{minimize}\ &\ v(p):=c^T\M c - \sum_{k=1}^K (c^T\M x_k)^2 p_k\notag\\
\text{subject to}\ &
\sum_{k=1}^K x_{kn} p_k = \mu_n, \quad n=1,\ldots,N,\label{min-prob-1}\\
&\sum_{k=1}^K p_k = 1,\label{min-prob-2}\\
&p_k \ge 0, \quad k=1,\ldots,K.\label{min-prob-3}
\end{align}
Since the constraint set is non-empty and compact, the minimizer $p^*$
exists. We need to show that $v(p^*)\ge 0$. To that end, consider the dual
problem with variables $q = (q_1,\ldots,q_{N+1})$ which correspond to
equality constraints \eqref{min-prob-1}--\eqref{min-prob-2} (see, e.g.,
\citet[Ch.~5.2.1]{BoydVandenberghe04}):
\begin{align}
\text{maximize}\ &\ d(q):=c^T\M c -\sum_{n=1}^N \mu_n q_n - q_{N+1}\notag \\
\text{subject to}\ &
\sum_{n=1}^N x_{kn}q_n + q_{N+1} - (c^T\M x_k)^2 \ge 0, \qquad k=1,\ldots,K.
\label{dual-prob-constr}
\end{align}
Let $q = (q_1,\ldots,q_{N+1})$ be defined by
\[
q_n = \frac{c_n^2}{\mu_n^2}\ \text{for}\ n=1,\ldots,N,\quad q_{N+1} = 0.
\]
It is easy to check that $d(q)= 0$ and $q$ satisfies constraints
\eqref{dual-prob-constr} (this follows from applying Jensen's inequality and
treating each vector $x_k\in\Delta_N$ as coefficients of a convex
combination). Then $v(p^*)\ge d(q) = 0$ in view of the duality.
\end{proof}

\begin{proof}[Proof of Theorem~\ref{th-many-agents}]
The inequality $\theta_0\ge \theta_1$ follows from
Lemma~\ref{lem-inequality} with $c=b_1-\bt$.

To prove claims \ref{th-many-agents-1} and \ref{th-many-agents-2}, let $Z_t
= \ln (Y_t^1/(1-Y_{t}^1))$ and denote by $\bt_t$ the weighted strategy of
agents $m\ge2$:
\[
\bt_t = \sum_{m=2}^M \frac{Y^m_t}{1-Y^1_t} b_m.
\]
By Ito's formula, we have
\begin{equation}
d Z_t = \gamma_t dt + (b_1-\bt_t)^T \M dW_t,\label{Zt}
\end{equation}
where
\[
\gamma_t =  (a-\bt_t)^T \M(b_1-\bt_t) - \frac12(b_1-\bt_t)^T\S(b_1-\bt_t) +
Y_t^1 (b_1-\bt_t)^T (\S-\M)(b_1-\bt_t).
\]
Notice that $\theta_0$ and $\theta_1$ give the minimum possible values for
$\gamma_t$ when the value of $Y_t^1$ approaches 0 and 1, respectively.

Suppose $\theta_0 > 0$. Then on the set $\Omega' = \{\lim_{t\to\infty} Y_t^1
= 0\} = \{\lim_{t\to\infty} Z_t = -\infty\}$ we have
\[
\lim_{t\to\infty}\frac{1}{t}\int_0^t \gamma_s ds >\frac{\theta_0}{2}>0,
\]
while by the strong law of large numbers for martingales we have with
probability 1
\[
\lim_{t\to\infty}\frac1t\int_0^t (b-\bt_s)^T \M d W_s = 0.
\]
Hence on $\Omega'$ we have $\lim_{t\to\infty} t^{-1}Z_t > 0$ a.s., which by
the definition of $\Omega'$ implies $\P(\Omega') = 0$, and proves
claim~\ref{th-many-agents-1} in the case $\theta_0>0$.
Claim~\ref{th-many-agents-2} is proved in a similar way, but using that
$\lim_{t\to\infty} t^{-1}\int_0^t \gamma_sds \ge \theta_1>0$ a.s.

To prove \ref{th-many-agents-1} when $\theta_0=\theta_1=0$, observe that in
this case $\gamma_t \ge 0$, so the process $Z_t$ is a submartingale, and a
continuous submartingale cannot have the limit $-\infty$ with positive
probability.\footnote{The process $Z_t^a = Z_{t\wedge \tau_a}$, where
$\tau_a = \inf\{t\ge 0: Z_t \ge a\}$ is a submartingale bounded from above,
hence it has a finite limit. Therefore, $\lim_{t\to\infty}Z_t$ exists and is
finite on the set $\{\sup_{t\ge0} Z_t <\infty\}$, while on the complementary
set we have $\limsup_{t\to\infty} Z_t = +\infty$.}

To prove claim \ref{th-many-agents-3}, consider the process $\ln Y_t^1$. By
Ito's formula,
\begin{multline*}
d \ln Y_t^1 = (1-Y_t^1)\Bigl\{\Bigl( (a-b_1)^T\M(b_1-\bt_t) \\+ \frac12(1-Y_t^1)
(b_1-\bt_t)^T(2\M-\S)(b_1-\bt_t)\Bigr) dt +
(b_1-\bt_t)^T\M dW_t\Bigr\}.
\end{multline*}
If the condition of the claim holds true, then Lemma~\ref{lem-inequality}
implies that the drift coefficient is non-negative, so $\ln Y_t^1$ is a
submartingale. Since it is bounded from above, there limit
$\lim_{t\to\infty} \ln Y_t^1$ is finite, hence $\lim_{t\to\infty} Y_t^1 >
0$.
\end{proof}

\subsection{The case of two agents}
\label{sec-two-agents}
When $M=2$, it is possible to give a more thorough characterization of the
asymptotics of the agents' wealth by using standard results on ergodicity of
Markov processes (see, e.g., \citet[\Ss\,16,18]{GikhmanSkorokhod72} for
details on results that are needed below). In this case the wealth dynamics
is determined by the one-dimensional equation for the wealth of one agent
and we have
\begin{equation}
dY_t^1 = Y_t^1(1-Y_t^1)\Bigl\{\bigl((a- b_2)^T\M(b_1- b_2) -Y_t^1(b_1-
b_2)^T\M(b_1- b_2)\bigr) dt + v dW_t\Bigr\},\label{sde-2}
\end{equation}
where $v^2 = (b_1-b_2)^T\S(b_1-b_2)$ and $W_t$ is a new one-dimensional
standard Brownian motion (it is obtained as $v^{-1}(b_1-b_2)^T \M W_t$ for
the old $N$-dimensional Brownian motion $W_t$ from
\eqref{sde}--\eqref{cov-B}). The coefficients $\theta_0$ and $\theta_1$ from
Theorem~\ref{th-many-agents} simplify to
\begin{align*}
&\theta_0 = (a-b_1)^T\M(b_1- b_2) + \frac12(b_1- b_2)^T (2\M-\S)(b_1- b_2),\\ 
&\theta_1 = (a-b_1)^T\M(b_1- b_2) + \frac12 (b_1- b_2)^T \S(b_1-b_2).
\end{align*}

\begin{theorem}
\label{th-two-agents}
Suppose $M=2$.

\noindent
I. If $v^2 > 0$, then the following statements hold true.
\begin{enumerate}[label=(I.\alph*)]
\item\label{th-two-agents-Ia} If $\theta_1> 0$, then agent 1 dominates.
\item\label{th-two-agents-Ib} If $\theta_0\ge 0$ and $\theta_1\le 0$, then both of the agents survive,
$\liminf_{t\to\infty} Y_t^1 = 0$,
$\limsup_{t\to\infty} Y_t^1 = 1$ a.s., the process $Y_t^1$ is recurrent
and has the invariant measure $F(dy) = \rho(y)dy$ with density
\[
\rho(y) = y^{\frac{2\theta_0}{v^2}-1}(1-y)^{-\frac{2\theta_1}{v^2}-1}, \qquad y\in(0,1).
\]
Moreover, if $\theta_0 \theta_1 < 0$, then $Y_t^1$ is positive recurrent,
$F([0,1]) = B(\frac{2\theta_0}{v^2},-\frac{2\theta_1}{v^2})< \infty$, and
$Y_t^1 \to F / F([0,1])$ in distribution as $t\to\infty$ ($B$ is the beta
function). If $\theta_0\theta_1=0$, then $Y_t^1$ is null recurrent and
$F([0,1])= \infty$.
\item\label{th-two-agents-Ic} If $\theta_0<0$, then agent 1 vanishes.
\end{enumerate}

\noindent
II. If $v^2 = 0$, then $Y_t^1$ is a non-random process and the following
statements hold true.
\begin{enumerate}[label=(II.\alph*)]
\item\label{th-two-agents-IIa} If $\theta_0>0$ and $\theta_1 \ge 0$, then agent 1 dominates.
\item\label{th-two-agents-IIb} If $\theta_0 > 0$ and $\theta_1< 0$, then both of the agents
survive and $\lim_{t\to\infty} Y_t^1 = \frac{\theta_0}{\theta_0-\theta_1}$.
\item\label{th-two-agents-IIc} If $\theta_0\le 0$ and $\theta_1 < 0$, then
agent 1 vanishes.
\item\label{th-two-agents-IId} If $\theta_0=\theta_1=0$ then $Y_t^1$ is
constant for all $t\ge0$.
\end{enumerate}
\end{theorem}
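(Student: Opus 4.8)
The plan is to reduce everything to the one-dimensional process $Y_t^1$ governed by \eqref{sde-2} and invoke the classical theory of regular one-dimensional diffusions. Since Theorem~\ref{th-limit} already confines $Y_t^1$ to $[0,1]$, two regimes appear. When $v^2=0$ the Brownian term drops out and \eqref{sde-2} is an autonomous ODE, so Part~II is a phase-line analysis. When $v^2>0$ the squared diffusion coefficient $v^2(Y_t^1)^2(1-Y_t^1)^2$ is strictly positive on $(0,1)$, so $Y_t^1$ is a regular diffusion there, and Part~I is resolved by computing its scale function and speed measure and reading off the boundary behaviour.

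For Part~II I would first record that $v^2=0$ forces $\S(b_1-b_2)=0$ (as $\S\ge0$), whence $\theta_0=(a-b_2)^T\M(b_1-b_2)$ and $\theta_0-\theta_1=(b_1-b_2)^T\M(b_1-b_2)\ge0$. The ODE then reads $\dot y=y(1-y)\bigl(\theta_0-y(\theta_0-\theta_1)\bigr)$, with equilibria at $0$, $1$, and, when $\theta_0-\theta_1>0$, at $y^\ast=\theta_0/(\theta_0-\theta_1)$. A sign analysis of this cubic on $(0,1)$, using $\theta_0\ge\theta_1$ to place $y^\ast$, yields each subcase: the interior drift is positive throughout (so $y\to1$) exactly when $\theta_0>0,\theta_1\ge0$; it changes sign at $y^\ast\in(0,1)$ (so $y\to y^\ast$) when $\theta_0>0,\theta_1<0$; it is nonpositive (so $y\to0$) when $\theta_0\le0,\theta_1<0$; and it vanishes identically when $\theta_0=\theta_1=0$, which forces $b_1=b_2$.

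For Part~I the core computation is the scale density $s'(y)=\exp\bigl(-\int^y 2\beta/\alpha^2\bigr)$, where $\beta$ and $\alpha^2=v^2y^2(1-y)^2$ are the drift and squared diffusion coefficients of \eqref{sde-2}. Writing $L=(a-b_2)^T\M(b_1-b_2)$ and $K=(b_1-b_2)^T\M(b_1-b_2)$, partial fractions give $2\beta/\alpha^2=(2/v^2)\bigl(L/y+(L-K)/(1-y)\bigr)$, so after substituting $L=\theta_0+v^2/2$ and $L-K=\theta_1-v^2/2$ one obtains
\[
s'(y)=y^{-(2\theta_0/v^2+1)}(1-y)^{2\theta_1/v^2-1},\qquad m(y)\propto\rho(y)=y^{2\theta_0/v^2-1}(1-y)^{-2\theta_1/v^2-1}.
\]
Boundary classification is now a matter of integrability of $s'$: $s(0^+)=-\infty\iff\theta_0\ge0$ and $s(1^-)=+\infty\iff\theta_1\le0$. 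The standard trichotomy then delivers the results. If $\theta_1>0$ (hence $\theta_0>0$), only $1$ is attracting, so $Y_t^1\to1$ (\ref{th-two-agents-Ia}); if $\theta_0<0$ (hence $\theta_1<0$), only $0$ is attracting, so $Y_t^1\to0$ (\ref{th-two-agents-Ic}); if $\theta_0\ge0$ and $\theta_1\le0$, both boundaries are non-attracting, the process is recurrent with $\liminf Y_t^1=0$, $\limsup Y_t^1=1$, and invariant measure equal to the speed measure $\propto\rho(y)dy$ (\ref{th-two-agents-Ib}). Finiteness of $\int_0^1\rho$, i.e.\ $\theta_0>0$ and $\theta_1<0$ (equivalently $\theta_0\theta_1<0$ in this case), separates positive recurrence, with total mass $B(2\theta_0/v^2,-2\theta_1/v^2)$ and convergence in distribution to the normalised $\rho$, from null recurrence when $\theta_0\theta_1=0$.

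I expect the scale/speed computation and the ODE phase line to be routine. The delicate steps, where I would be most careful, are twofold: translating the scale-function integrability conditions into the precise almost-sure asymptotics (convergence to a boundary, and the $\liminf=0$, $\limsup=1$ statements) via the hitting-probability and trichotomy results; and justifying the convergence in distribution to the normalised invariant law in the positive-recurrent subcase, which needs the ergodic theorem for one-dimensional diffusions (from \citet[\Ss\,16,18]{GikhmanSkorokhod72}) rather than mere recurrence. I would also verify that the boundary cases $\theta_0=0$ and $\theta_1=0$ land correctly as null recurrent, and use $\sum_n(b_{1n}-b_{2n})=0$ together with Lemma~\ref{lem-inequality} to keep $\theta_0\ge\theta_1$ throughout.
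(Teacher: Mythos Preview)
Your proposal is correct and uses the same underlying tool as the paper---scale and speed for a one-dimensional diffusion---but organises the argument slightly differently. The paper first transforms to $Z_t=\ln(Y_t^1/(1-Y_t^1))$ on $\R$ and computes the scale function and speed measure there, whereas you work directly with $Y_t^1$ on $(0,1)$; the two computations are equivalent via the diffeomorphism $y\mapsto\ln(y/(1-y))$, and your partial-fractions calculation is correct. The paper also dispatches cases~\ref{th-two-agents-Ia} and~\ref{th-two-agents-Ic} by citing Theorem~\ref{th-many-agents} (noting for \ref{th-two-agents-Ic} that agent~2's coefficient is $\tilde\theta_1=-\theta_0$), reserving the scale/speed machinery for~\ref{th-two-agents-Ib} alone, while you read all three cases off the scale-function integrability. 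Your route is more self-contained and unified; the paper's leans on the general multi-agent result already established. For Part~II the paper simply says the claims follow from the ODE, so your phase-line sketch is exactly what is intended.
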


\begin{proof}
Claims \ref{th-two-agents-Ia} and \ref{th-two-agents-Ic} immediately follow
from Theorem~\ref{th-many-agents} (for \ref{th-two-agents-Ic} note that the
second agent has the corresponding coefficient $\tilde \theta_1 =
-\theta_0$). To prove \ref{th-two-agents-Ib}, let $Z_t =
\ln(Y_t^1/(1-Y_t^1))$. Define
\begin{equation}
f(z) = \theta_0 + \frac{\theta_1-\theta_0}{1+e^{-z}},\label{f}
\end{equation}
so that (cf.~\eqref{Zt})
\[
d Z_t = f(Z_t) dt + v dW_t.
\]
Let $s(z)$ and $m(dz)$ be the scale function and the speed measure of $Z_t$,
\[
\begin{aligned}
&s(z) = \int_0^z \exp\Bigl(-\int_0^y \frac{2f(u)}{v^2} du\Bigr) dy =
C\int_1^{e^z} (1+u)^{\frac{2(\theta_0-\theta_1)}{v^2}} u^{-1-\frac{2\theta_0}{v^2}} du,\\
&m(dz) = \frac{2}{v^2s'(z)} dz = \frac{2C}{v^2}
(1+e^z)^{\frac{2(\theta_0-\theta_1)}{v^2}} e^{-\frac{2\theta_0}{v^2}z},
\end{aligned}
\]
where $C=4^{\frac{\theta_1-\theta_0}{v^2}}$. In view of the conditions
$\theta_0\ge0$, $\theta_1\le 0$ we have $s(\pm\infty) :=
\lim_{z\to\pm\infty} s(z) = \pm\infty$, which implies that the process $Z_t$
is recurrent and its speed measure is the unique (up to multiplication by a
constant) invariant measure. If $\theta_0>0$ an $\theta_1<0$, we have
$m(\R)<\infty$, and then the process is positive recurrent and ergodic, so
$\lim_{t\to\infty}Z_t^1 = m/m(\R)$ in distribution, implying the claimed
result for $Y_t^1$. If $d_0d_1=0$, then $m(\R)=\infty$ and $Z_t$ is null
recurrent, so $Y_t^1$ is also null recurrent.

If $v=0$, then $Z_t$ has no Brownian part and claims
\ref{th-two-agents-IIa}--\ref{th-two-agents-IId} easily follow from analysis
of the solution of the corresponding ODE.
\end{proof}

\begin{corollary}
In the general model ($M\ge 2$), the strategy $\hat b=a$ is the unique
strategy which guarantees survival of an agent using it in any strategy
profile with any (positive) initial wealth.
\end{corollary}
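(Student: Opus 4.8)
The plan is to prove the two halves of the statement separately: first that the strategy $b_1=a$ does guarantee survival against every profile and every positive initial wealth, and then that no other strategy has this property. Both halves reduce to what has already been established in Theorem~\ref{th-many-agents} and Theorem~\ref{th-two-agents}; the only genuine work is an algebraic simplification of the coefficients $\theta_0,\theta_1$ in the special case where a competitor uses $a$.

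For the first half I would simply invoke Theorem~\ref{th-many-agents}\ref{th-many-agents-3}. If agent $1$ uses $b_1=a$ then $a-b_1=0$, so the quantity $(a-b_1)^T\M(b_1-b_m)$ vanishes and is in particular $\ge 0$ for every $m\ge 2$, irrespective of the opponents' strategies and of $M$. Hence the hypothesis of part~\ref{th-many-agents-3} holds in every profile, yielding $\lim_{t\to\infty}Y_t^1>0$ a.s., and a fortiori survival, for any positive initial wealth. (Equivalently, one notes $\theta_1\ge 0$ here, since $\S=\M\sigma\M$ is positive semidefinite, and applies part~\ref{th-many-agents-1}.)

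For uniqueness, suppose $b_1\neq a$ and exhibit a profile in which agent $1$ vanishes. It suffices to take $M=2$ with the opponent using $b_2=a$; in fact the same conclusion holds for any $M$ if all opponents use $a$, since then $\bt_t\equiv a$ and the one-dimensional equation \eqref{Zt} for $Z_t=\ln(Y_t^1/(1-Y_t^1))$ becomes autonomous with the coefficients of \eqref{sde-2}. Writing $c=a-b_1$ and substituting $b_2=a$ into the two-agent expressions for $\theta_0,\theta_1$, the linear $\M$-terms cancel and one finds
\[
\theta_0=-\tfrac12\,c^T\S c,\qquad \theta_1=-c^T\M c+\tfrac12\,c^T\S c,\qquad v^2=c^T\S c.
\]
Since $\S$ is positive semidefinite, $\theta_0=-\tfrac12 v^2\le 0$. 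If $v^2>0$ then $\theta_0<0$ and Theorem~\ref{th-two-agents}\ref{th-two-agents-Ic} gives vanishing of agent $1$. If $v^2=0$ then $Y_t^1$ is deterministic, $\theta_0=0$, and $\theta_1=-c^T\M c<0$ because $\M$ is strictly positive definite and $c\neq 0$; then Theorem~\ref{th-two-agents}\ref{th-two-agents-IIc} again gives vanishing. In either case $b_1$ fails to guarantee survival, so $a$ is the only strategy that does.

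The one point needing care is the degenerate case $v^2=0$, that is, when the deviation $a-b_1$ lies in the kernel of $\S$: there the recurrence/transience dichotomy of part~I is unavailable and one must fall back on the deterministic analysis of part~II. This is precisely where the strict positivity of all $\mu_n$ (equivalently, $\M$ being positive definite) is essential, since it forces $\theta_1=-c^T\M c<0$ and hence vanishing even when the diffusion term disappears. Apart from this case split, the argument is a direct substitution into the two preceding theorems.
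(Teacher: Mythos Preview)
Your proof is correct and follows essentially the same route as the paper: reduce to the two-agent case by having all opponents play $a$, compute $\theta_0=-\tfrac12 v^2$, and split into the cases $v^2>0$ (use \ref{th-two-agents-Ic}) and $v^2=0$ (use \ref{th-two-agents-IIc} with $\theta_1=-c^T\M c<0$). The only difference is that you make the survival direction explicit via Theorem~\ref{th-many-agents}\ref{th-many-agents-3}, whereas the paper leaves it implicit and focuses solely on uniqueness.
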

\begin{proof}
A strategy of agent 1 surviving in any strategy profile must also survive
when all the other agents use the strategies $b_m=a$. In this case those
agents can be considered as a single agent, and then $\theta_0 = -\frac12
(b_1-a)^T\S(b_1-a) \le0$. By \ref{th-two-agents-Ic} and
\ref{th-two-agents-IIc} of Theorem~\ref{th-two-agents}, survival is possible
only when $\theta_0=0$, which implies $v^2=0$ and $\theta_1 =
-(a-b_1)^T\M(a-b_1)$. By \ref{th-two-agents-IIc},~\ref{th-two-agents-IId},
for survival it must hold that $\theta_1=0$, hence $b_1=a$.
\end{proof}

\begin{remark}
It is worth mentioning that the process $Y_t^1$ satisfies the stochastic
replicator equation of \cite{FudenbergHarris92} (see also
\cite{TaylorJonker78} for the seminal work on the deterministic replicator
equation, and \cite{FosterYoung90} for another form of the stochastic
equation).

Recall that the corresponding model can be formulated as follows. Consider a
symmetric two-player game with two pure strategies and a payoff matrix
$A=(A_{ij}) \in \R^{2\times2}$. There are two continuum populations of
players who are programmed to use, respectively, strategies 1 or 2 (e.g.\
strategies are phenotypes of biological species). Suppose the players are
randomly matched against each other. Let $S_t = (S_t^1,S_t^2)$ denote the
size of the populations $i=1,2$. Then the average payoff of a player from
population $i$ in a game against a random adversary is $(A S_t)_i$. The
model states that the population growth rates satisfy the equation
\[
\frac{d S_t^i}{S_t^i} =  (AS_t)_idt + \sigma_i d W_t^i,
\]
where $W_t^i$ are independent standard Brownian motions. Let $Y_t^i =
S_t^i/(S_t^1 + S_t^2)$ denote the proportion of players of type $i$. By
Ito's formula, we have
\begin{equation}
d Y_t^1 = Y_t^1 Y_t^2\bigl\{ ((-a_{22} +a_{12} + \sigma_2^2) + (a_{11} -
a_{21} - \sigma_1^2 + a_{22} - a_{12} - \sigma_2^2) Y_t^1) dt + v d
W_t\bigr\},\label{replicator}
\end{equation}
where $v=\sqrt{\sigma_1^2 + \sigma_2^2}$, and $W_t = v^{-1}(\sigma_1 W_t^1 +
\sigma_2 W_t^2)$ is a new Brownian motion. If $\sigma_1=\sigma_2=0$, one
gets the non-random replicator equation of \cite{TaylorJonker78}.

It is straightforward to check that equation \eqref{sde-2} is a particular
case of \eqref{replicator} with
\[
A =
\begin{pmatrix}
\theta_1 & 0 \\
0 & -\theta_0
\end{pmatrix},
\qquad
\sigma_1^2 = \sigma_2^2= \frac12(b_1-b_2)^T\S(b_1-b_2).
\]
Note that \eqref{replicator} admits one more type of asymptotic behavior,
which does not appear in our model because $\theta_0\ge \theta_1$, namely
when $\P(Y_t^1\to 1)>0$ and $\P(Y_t^1\to0) > 0$, see Proposition~1 of
\cite{FudenbergHarris92}.
\end{remark}

\section{Examples}
\label{sec-examples}
As an illustration of the survival and dominance conditions in terms of the
coefficients $\theta_0,\theta_1$, consider the model with two agents and two
assets. Assume $\mu_1=\mu_2 = 1/2$. Then the strategies of agents 1 and 2
are given respectively by the vectors $(b_1,-b_1)$ and $(b_2,-b_2)$. Denote
the coefficient $a_1$ of asset 1 simply by $a$, so that the coefficient
$a_2=-a$. Let $\sigma^2=\Var X_1$. Then
\[
\S = 
\begin{pmatrix}
\phantom{-}4\sigma^2 & -4\sigma^2 \\
-4\sigma^2 & \phantom{-}4\sigma^2
\end{pmatrix}.
\]
Note that we have $0\le \sigma^2 \le 1/4$; the maximum variance corresponds
to the case of a complete market \eqref{complete}, the zero variance to the
case of non-random payoffs. The coefficients $\theta_0$, $\theta_1$, $v$
become
\[
\begin{aligned}
&\theta_0 = 4(a-b_1)(b_1-b_2) + 4(1-2\sigma^2) (b_1-b_2)^2,\\
&\theta_1 = 4(a-b_1)(b_1-b_2) + 8\sigma^2 (b_1-b_2)^2,\\
&v = 4\sigma|b_1-b_2|.
\end{aligned}
\]
The conditions $\theta_0\ge0$ and $\theta_1\le 0$ turn into
\begin{align*}
&\theta_0 \ge 0 \iff (1-2\sigma^2)|b_1-b_2| \ge (b_1-a) \sgn(b_1-b_2),\\
&\theta_1 \le 0 \iff 2\sigma^2|b_1-b_2| \le (b_1-a) \sgn(b_1-b_2).
\end{align*}
These inequalities define regions with linear boundaries with slope $1$,
$2\sigma^2/(2\sigma^2-1)$, and $(2\sigma^2-1)/(2\sigma^2)$. In
Figure~\ref{figure}, we depict them for $\sigma^2=1/4,\,1/8,\,0$ and $a=0$.
The green region is where agent 1 dominates, the red region is where agent 2
dominates, in the blue region both of the agents survive. The color of the
boundaries bears the same meaning.

Figure~\ref{figure-simulation} shows simulated paths of the process $Y_t^1$
in the case $\sigma^2=1/8$ for the three pairs of the agents' strategies
$(b_1,b_2)$: $(-1/4,1)$, $(-1/3,1)$, $(-1/2,1)$. In the first pair, agent 1
dominates ($\theta_1>0$) and the process $Y_t^1$ is transient. In the second
pair both of the agents survive, the process $Y_t^1$ is null recurrent
($\theta_0>0$, $\theta_1=0$). In the third pair also both of the agents
survive, but $Y_t^1$ is positive recurrent ($\theta_0>0$, $\theta_1<0$). The
same realization of the Brownian motion $W_t$ was used in the simulations.

\bigskip
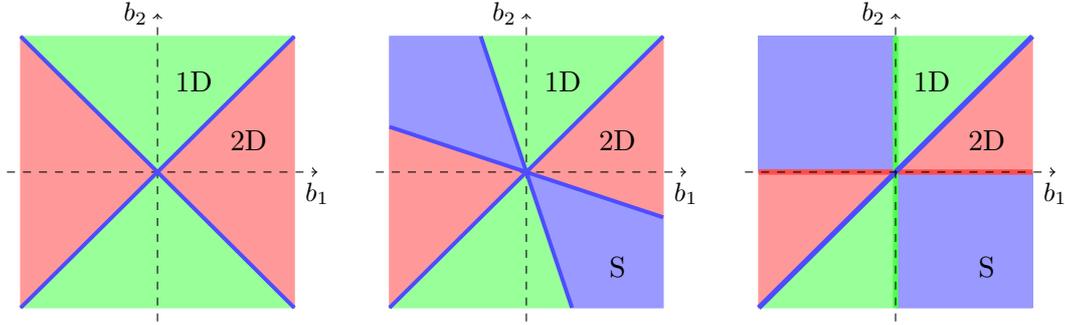
\begin{figure}[h]
\makebox[\textwidth][c]{%
% sigma^2 = 1/4
\begin{tikzpicture}[scale=0.6]
\filldraw[green!40] (0,0)--(3,-3)--(-3,-3)--(0,0);
\filldraw[green!40] (0,0)--(3,3)--(-3,3)--(0,0);

\filldraw[red!40] (0,0)--(3,-3)--(3,3)--(0,0);
\filldraw[red!40] (0,0)--(-3,-3)--(-3,3)--(0,0);

\draw[dashed,->] (0,-3.3)--(0,3.5) node[left] {\small $b_2$};
\draw[dashed,->] (-3.3,0) node[below left,white!0] {\phantom{a}} --(3.5,0) node[below] {\small $b_1$};

\draw[line width=1.5pt,blue!70] (-3,3)--(3,-3);
\draw[line width=1.5pt,blue!70] (-3,-3)--(3,3);

\draw (0.8,2) node {1D};
\draw (2,0.7) node {2D};
\end{tikzpicture}
% sigma^2 = 1/8
\hfill
\begin{tikzpicture}[scale=0.6]
\filldraw[blue!40] (0,0)--(3,-1)--(3,-3)--(1,-3)--(0,0);
\filldraw[blue!40] (0,0)--(-1,3)--(-3,3)--(-3,1)--(0,0);

\filldraw[green!40] (0,0)--(1,-3)--(-3,-3)--(0,0);
\filldraw[green!40] (0,0)--(-1,3)--(3,3)--(0,0);

\filldraw[red!40] (0,0)--(3,3)--(3,-1)--(0,0);
\filldraw[red!40] (0,0)--(-3,-3)--(-3,1)--(0,0);

\draw[dashed,->] (0,-3.3)--(0,3.5) node[left] {\small $b_2$};
\draw[dashed,->] (-3.3,0) --(3.5,0) node[below] {\small $b_1$};

\draw[line width=1.5pt,blue!70] (-3,1)--(3,-1);
\draw[line width=1.5pt,blue!70] (-1,3)--(1,-3);
\draw[line width=1.5pt,blue!70] (-3,-3)--(3,3);

\draw (0.8,2) node {1D};
\draw (2,-2.1) node {S};
\draw (2,0.7) node {2D};
\end{tikzpicture}
% sigma^2 = 0
\hfill
\begin{tikzpicture}[scale=0.6]
\filldraw[blue!40] (0,0)--(3,0)--(3,-3)--(0,-3)--(0,0);
\filldraw[blue!40] (0,0)--(0,3)--(-3,3)--(-3,0)--(0,0);

\filldraw[green!40] (0,0)--(0,-3)--(-3,-3)--(0,0);
\filldraw[green!40] (0,0)--(3,3)--(0,3)--(0,0);

\filldraw[red!40] (0,0)--(3,0)--(3,3)--(0,0);
\filldraw[red!40] (0,0)--(-3,-3)--(-3,0)--(0,0);

\draw[line width=2pt,red!70] (-3,0)--(3,0);
\draw[line width=2pt,green!70] (0,-3)--(0,3);
\draw[line width=2pt,blue!70] (-3,-3)--(3,3);

\draw[dashed,->] (0,-3.3)--(0,3.5) node[left] {\small $b_2$};
\draw[dashed,->] (-3.3,0) --(3.5,0) node[below] {\small $b_1$};

\draw (0.8,2) node {1D};
\draw (2,-2.1) node {S};
\draw (2,0.7) node {2D};
\end{tikzpicture}
}
\caption{Regions of survival and dominance in the model with two agents and
two assets. Left: $\sigma^2=1/4$, middle: $\sigma^2=1/8$, right:
$\sigma=0$. ``1D'' (green region) -- agent 1 dominates; ``2D'' (red region) --
agent 2 dominates; ``S'' (blue region)
-- both survive.\\}
\centering
\label{figure}
\end{figure}

\bigskip
\begin{figure}[h]
\makebox[\textwidth][c]{%
\begin{tikzpicture}[xscale=0.07,yscale=2.4]
\draw[thick] (0.00,0.500)--(0.25,0.670)--(0.50,0.585)--(0.75,0.733)--(1.00,0.620)--(1.25,0.815)--(1.50,0.727)--(1.75,0.868)--(2.00,0.865)--(2.25,0.857)--(2.50,0.858)--(2.75,0.909)--(3.00,0.922)--(3.25,0.882)--(3.50,0.928)--(3.75,0.930)--(4.00,0.954)--(4.25,0.980)--(4.50,0.984)--(4.75,0.973)--(5.00,0.957)--(5.25,0.981)--(5.50,0.983)--(5.75,0.993)--(6.00,0.987)--(6.25,0.993)--(6.50,0.991)--(6.75,0.987)--(7.00,0.985)--(7.25,0.982)--(7.50,0.974)--(7.75,0.976)--(8.00,0.982)--(8.25,0.991)--(8.50,0.988)--(8.75,0.987)--(9.00,0.990)--(9.25,0.988)--(9.50,0.984)--(9.75,0.974)--(10.00,0.944)--(10.25,0.958)--(10.50,0.961)--(10.75,0.926)--(11.00,0.969)--(11.25,0.944)--(11.50,0.959)--(11.75,0.959)--(12.00,0.956)--(12.25,0.958)--(12.50,0.954)--(12.75,0.942)--(13.00,0.940)--(13.25,0.909)--(13.50,0.944)--(13.75,0.974)--(14.00,0.978)--(14.25,0.991)--(14.50,0.992)--(14.75,0.996)--(15.00,0.997)--(15.25,0.997)--(15.50,0.999)--(15.75,0.999)--(16.00,0.999)--(16.25,0.999)--(16.50,0.999)--(16.75,0.998)--(17.00,0.999)--(17.25,0.999)--(17.50,0.999)--(17.75,0.999)--(18.00,0.999)--(18.25,0.999)--(18.50,0.999)--(18.75,0.999)--(19.00,1.000)--(19.25,1.000)--(19.50,1.000)--(19.75,1.000)--(20.00,1.000)--(20.25,1.000)--(20.50,1.000)--(20.75,1.000)--(21.00,1.000)--(21.25,1.000)--(21.50,1.000)--(21.75,1.000)--(22.00,1.000)--(22.25,1.000)--(22.50,1.000)--(22.75,1.000)--(23.00,1.000)--(23.25,1.000)--(23.50,1.000)--(23.75,1.000)--(24.00,1.000)--(24.25,1.000)--(24.50,1.000)--(24.75,1.000)--(25.00,1.000)--(25.25,1.000)--(25.50,1.000)--(25.75,1.000)--(26.00,1.000)--(26.25,1.000)--(26.50,1.000)--(26.75,1.000)--(27.00,1.000)--(27.25,1.000)--(27.50,1.000)--(27.75,1.000)--(28.00,1.000)--(28.25,1.000)--(28.50,1.000)--(28.75,1.000)--(29.00,1.000)--(29.25,1.000)--(29.50,1.000)--(29.75,1.000)--(30.00,1.000)--(30.25,1.000)--(30.50,1.000)--(30.75,1.000)--(31.00,1.000)--(31.25,1.000)--(31.50,1.000)--(31.75,1.000)--(32.00,1.000)--(32.25,1.000)--(32.50,1.000)--(32.75,1.000)--(33.00,1.000)--(33.25,1.000)--(33.50,1.000)--(33.75,1.000)--(34.00,1.000)--(34.25,1.000)--(34.50,1.000)--(34.75,1.000)--(35.00,1.000)--(35.25,1.000)--(35.50,1.000)--(35.75,1.000)--(36.00,1.000)--(36.25,1.000)--(36.50,1.000)--(36.75,1.000)--(37.00,1.000)--(37.25,1.000)--(37.50,1.000)--(37.75,1.000)--(38.00,1.000)--(38.25,1.000)--(38.50,1.000)--(38.75,1.000)--(39.00,1.000)--(39.25,1.000)--(39.50,1.000)--(39.75,1.000)--(40.00,1.000)--(40.25,1.000)--(40.50,1.000)--(40.75,1.000)--(41.00,1.000)--(41.25,1.000)--(41.50,1.000)--(41.75,1.000)--(42.00,1.000)--(42.25,1.000)--(42.50,1.000)--(42.75,1.000)--(43.00,1.000)--(43.25,1.000)--(43.50,1.000)--(43.75,1.000)--(44.00,1.000)--(44.25,1.000)--(44.50,1.000)--(44.75,1.000)--(45.00,1.000)--(45.25,1.000)--(45.50,1.000)--(45.75,1.000)--(46.00,1.000)--(46.25,1.000)--(46.50,1.000)--(46.75,1.000)--(47.00,1.000)--(47.25,1.000)--(47.50,1.000)--(47.75,1.000)--(48.00,1.000)--(48.25,1.000)--(48.50,1.000)--(48.75,1.000)--(49.00,1.000)--(49.25,1.000)--(49.50,1.000)--(49.75,1.000)--(50.00,1.000);;
\draw[thick,->] (0,0) -- (55,0) node[below] {\small $t$};
\draw[thick,->] (0,0)--(0,1.2) node[left] {\small $Y$};
\draw (1.25,1)--(-1.25,1) node[left] {\scriptsize $1$};
\foreach \i in {0,10,20,30,40,50} {\draw (\i,0.03)--(\i,-0.03) node[below] {\scriptsize \i};}
\end{tikzpicture}
\hfill
\begin{tikzpicture}[xscale=0.07,yscale=2.4]
\draw[thick] (0.00,0.500)--(0.25,0.667)--(0.50,0.563)--(0.75,0.714)--(1.00,0.579)--(1.25,0.791)--(1.50,0.680)--(1.75,0.842)--(2.00,0.831)--(2.25,0.815)--(2.50,0.810)--(2.75,0.875)--(3.00,0.889)--(3.25,0.824)--(3.50,0.891)--(3.75,0.888)--(4.00,0.924)--(4.25,0.967)--(4.50,0.972)--(4.75,0.948)--(5.00,0.912)--(5.25,0.960)--(5.50,0.962)--(5.75,0.985)--(6.00,0.968)--(6.25,0.981)--(6.50,0.973)--(6.75,0.959)--(7.00,0.952)--(7.25,0.938)--(7.50,0.909)--(7.75,0.913)--(8.00,0.933)--(8.25,0.965)--(8.50,0.952)--(8.75,0.946)--(9.00,0.954)--(9.25,0.947)--(9.50,0.922)--(9.75,0.876)--(10.00,0.759)--(10.25,0.825)--(10.50,0.842)--(10.75,0.731)--(11.00,0.882)--(11.25,0.798)--(11.50,0.850)--(11.75,0.851)--(12.00,0.846)--(12.25,0.854)--(12.50,0.842)--(12.75,0.808)--(13.00,0.807)--(13.25,0.734)--(13.50,0.835)--(13.75,0.923)--(14.00,0.932)--(14.25,0.971)--(14.50,0.973)--(14.75,0.987)--(15.00,0.990)--(15.25,0.989)--(15.50,0.995)--(15.75,0.995)--(16.00,0.994)--(16.25,0.997)--(16.50,0.993)--(16.75,0.991)--(17.00,0.993)--(17.25,0.990)--(17.50,0.991)--(17.75,0.992)--(18.00,0.996)--(18.25,0.992)--(18.50,0.993)--(18.75,0.991)--(19.00,0.995)--(19.25,0.998)--(19.50,0.999)--(19.75,0.998)--(20.00,0.998)--(20.25,0.998)--(20.50,0.998)--(20.75,0.996)--(21.00,0.996)--(21.25,0.997)--(21.50,0.998)--(21.75,0.997)--(22.00,0.998)--(22.25,0.998)--(22.50,0.997)--(22.75,0.998)--(23.00,0.999)--(23.25,0.999)--(23.50,0.999)--(23.75,0.999)--(24.00,0.999)--(24.25,0.998)--(24.50,0.999)--(24.75,0.999)--(25.00,0.999)--(25.25,1.000)--(25.50,1.000)--(25.75,1.000)--(26.00,0.999)--(26.25,0.999)--(26.50,1.000)--(26.75,0.999)--(27.00,0.999)--(27.25,0.999)--(27.50,0.999)--(27.75,1.000)--(28.00,0.999)--(28.25,0.999)--(28.50,0.999)--(28.75,0.999)--(29.00,0.999)--(29.25,0.999)--(29.50,0.999)--(29.75,0.998)--(30.00,0.999)--(30.25,0.999)--(30.50,0.999)--(30.75,0.999)--(31.00,0.999)--(31.25,0.998)--(31.50,0.999)--(31.75,0.998)--(32.00,0.999)--(32.25,0.998)--(32.50,0.996)--(32.75,0.993)--(33.00,0.992)--(33.25,0.990)--(33.50,0.989)--(33.75,0.973)--(34.00,0.952)--(34.25,0.932)--(34.50,0.916)--(34.75,0.961)--(35.00,0.887)--(35.25,0.922)--(35.50,0.846)--(35.75,0.884)--(36.00,0.840)--(36.25,0.919)--(36.50,0.898)--(36.75,0.916)--(37.00,0.908)--(37.25,0.868)--(37.50,0.859)--(37.75,0.932)--(38.00,0.918)--(38.25,0.940)--(38.50,0.960)--(38.75,0.944)--(39.00,0.933)--(39.25,0.910)--(39.50,0.915)--(39.75,0.949)--(40.00,0.921)--(40.25,0.943)--(40.50,0.941)--(40.75,0.904)--(41.00,0.910)--(41.25,0.914)--(41.50,0.927)--(41.75,0.956)--(42.00,0.951)--(42.25,0.980)--(42.50,0.980)--(42.75,0.993)--(43.00,0.997)--(43.25,0.998)--(43.50,0.999)--(43.75,0.999)--(44.00,0.999)--(44.25,0.999)--(44.50,1.000)--(44.75,1.000)--(45.00,1.000)--(45.25,1.000)--(45.50,1.000)--(45.75,1.000)--(46.00,1.000)--(46.25,1.000)--(46.50,1.000)--(46.75,1.000)--(47.00,1.000)--(47.25,1.000)--(47.50,1.000)--(47.75,1.000)--(48.00,1.000)--(48.25,1.000)--(48.50,1.000)--(48.75,1.000)--(49.00,1.000)--(49.25,1.000)--(49.50,1.000)--(49.75,1.000)--(50.00,1.000);
\draw[thick,->] (0,0) -- (55,0) node[below] {\small $t$};
\draw[thick,->] (0,0)--(0,1.2) node[left] {\small $Y$};
\draw (1.25,1)--(-1.25,1) node[left] {\scriptsize $1$};
\foreach \i in {0,10,20,30,40,50} {\draw (\i,0.03)--(\i,-0.03) node[below] {\scriptsize \i};}
\end{tikzpicture}
\hfill
\begin{tikzpicture}[xscale=0.07,yscale=2.4]
\draw[thick] (0.00,0.500)--(0.25,0.656)--(0.50,0.513)--(0.75,0.669)--(1.00,0.488)--(1.25,0.734)--(1.50,0.573)--(1.75,0.776)--(2.00,0.744)--(2.25,0.709)--(2.50,0.695)--(2.75,0.788)--(3.00,0.802)--(3.25,0.678)--(3.50,0.789)--(3.75,0.772)--(4.00,0.836)--(4.25,0.924)--(4.50,0.928)--(4.75,0.854)--(5.00,0.759)--(5.25,0.885)--(5.50,0.882)--(5.75,0.952)--(6.00,0.880)--(6.25,0.924)--(6.50,0.880)--(6.75,0.812)--(7.00,0.779)--(7.25,0.732)--(7.50,0.647)--(7.75,0.682)--(8.00,0.758)--(8.25,0.866)--(8.50,0.808)--(8.75,0.786)--(9.00,0.809)--(9.25,0.783)--(9.50,0.702)--(9.75,0.600)--(10.00,0.421)--(10.25,0.582)--(10.50,0.634)--(10.75,0.472)--(11.00,0.747)--(11.25,0.594)--(11.50,0.692)--(11.75,0.693)--(12.00,0.689)--(12.25,0.710)--(12.50,0.683)--(12.75,0.629)--(13.00,0.634)--(13.25,0.533)--(13.50,0.702)--(13.75,0.851)--(14.00,0.860)--(14.25,0.937)--(14.50,0.934)--(14.75,0.966)--(15.00,0.972)--(15.25,0.963)--(15.50,0.981)--(15.75,0.980)--(16.00,0.966)--(16.25,0.982)--(16.50,0.947)--(16.75,0.922)--(17.00,0.935)--(17.25,0.897)--(17.50,0.901)--(17.75,0.906)--(18.00,0.941)--(18.25,0.878)--(18.50,0.896)--(18.75,0.847)--(19.00,0.917)--(19.25,0.954)--(19.50,0.980)--(19.75,0.949)--(20.00,0.956)--(20.25,0.941)--(20.50,0.925)--(20.75,0.835)--(21.00,0.836)--(21.25,0.866)--(21.50,0.904)--(21.75,0.863)--(22.00,0.891)--(22.25,0.891)--(22.50,0.852)--(22.75,0.907)--(23.00,0.927)--(23.25,0.953)--(23.50,0.937)--(23.75,0.921)--(24.00,0.879)--(24.25,0.863)--(24.50,0.944)--(24.75,0.936)--(25.00,0.918)--(25.25,0.947)--(25.50,0.933)--(25.75,0.955)--(26.00,0.898)--(26.25,0.905)--(26.50,0.924)--(26.75,0.779)--(27.00,0.859)--(27.25,0.818)--(27.50,0.851)--(27.75,0.898)--(28.00,0.856)--(28.25,0.803)--(28.50,0.710)--(28.75,0.782)--(29.00,0.847)--(29.25,0.905)--(29.50,0.816)--(29.75,0.736)--(30.00,0.779)--(30.25,0.857)--(30.50,0.878)--(30.75,0.848)--(31.00,0.773)--(31.25,0.720)--(31.50,0.821)--(31.75,0.758)--(32.00,0.857)--(32.25,0.814)--(32.50,0.635)--(32.75,0.540)--(33.00,0.570)--(33.25,0.560)--(33.50,0.608)--(33.75,0.436)--(34.00,0.387)--(34.25,0.395)--(34.50,0.410)--(34.75,0.693)--(35.00,0.428)--(35.25,0.608)--(35.50,0.458)--(35.75,0.602)--(36.00,0.533)--(36.25,0.746)--(36.50,0.693)--(36.75,0.747)--(37.00,0.728)--(37.25,0.644)--(37.50,0.651)--(37.75,0.823)--(38.00,0.784)--(38.25,0.836)--(38.50,0.883)--(38.75,0.829)--(39.00,0.791)--(39.25,0.726)--(39.50,0.749)--(39.75,0.838)--(40.00,0.754)--(40.25,0.813)--(40.50,0.803)--(40.75,0.708)--(41.00,0.723)--(41.25,0.734)--(41.50,0.767)--(41.75,0.856)--(42.00,0.835)--(42.25,0.927)--(42.50,0.922)--(42.75,0.972)--(43.00,0.987)--(43.25,0.990)--(43.50,0.995)--(43.75,0.996)--(44.00,0.993)--(44.25,0.993)--(44.50,0.995)--(44.75,0.995)--(45.00,0.998)--(45.25,0.998)--(45.50,0.999)--(45.75,0.998)--(46.00,0.997)--(46.25,0.998)--(46.50,0.996)--(46.75,0.996)--(47.00,0.997)--(47.25,0.996)--(47.50,0.981)--(47.75,0.990)--(48.00,0.993)--(48.25,0.990)--(48.50,0.975)--(48.75,0.975)--(49.00,0.958)--(49.25,0.950)--(49.50,0.917)--(49.75,0.912)--(50.00,0.910);
\draw[thick,->] (0,0) -- (55,0) node[below] {\small $t$};
\draw[thick,->] (0,0)--(0,1.2) node[left] {\small $Y$};
\draw (1.25,1)--(-1.25,1) node[left] {\scriptsize $1$};
\foreach \i in {0,10,20,30,40,50} {\draw (\i,0.03)--(\i,-0.03) node[below] {\scriptsize \i};}
\end{tikzpicture}
}
\caption{Simulations of the wealth process of agent 1 when $\sigma^2=1/8$ and the
initial wealth $Y_0^1=1/2$.
Left: $b_1=-1/4,b_2=1$ (transience), middle: $b_1=-1/3,b_2=1$ (null
recurrence), right: $b_1=-1/2$, $b_2=1$ (positive recurrence).}
\centering
\label{figure-simulation}
\end{figure}
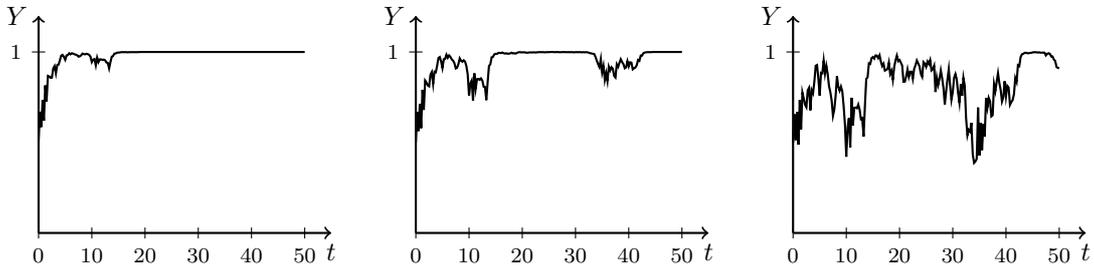

\section{Extension: the two-agent model with regime switching}
\label{sec-switching}
In this section we consider an extension of the two-agent model from
Section~\ref{sec-two-agents} in which the distribution of the payoffs
changes between two regimes at random moments of time.

Let $Q_t$ be a continuous-time ergodic Markov chain with two states and the
generator matrix
\[G =
\begin{pmatrix}
g_{11} & g_{12}\\ g_{21} & g_{22}
\end{pmatrix},
\]
where $g_{12} = -g_{11} >0$ and $g_{21}=-g_{22} > 0$. Let
$\pi = (\pi_1,\pi_2)$ denote its invariant distribution given by
\[
\pi_1 = \dfrac{g_{21}}{g_{12}+g_{21}},\quad \pi_2=
\dfrac{g_{12}}{g_{12}+g_{21}}.
\]
Suppose that when $Q_t$ is in state $i\in\{1,2\}$, the model is governed by
equations \eqref{sde}--\eqref{cov-B} with parameters $\mu(i),a(i) \in \R^N$,
$\sigma(i) \in \R^{N\times N}$. The agents' strategies $b_1,b_2$ are
constant. Then the wealth of agent 1 satisfies the SDE (cf.~\eqref{sde-2})
\begin{multline*}
dY_t^1 = Y_t^1(1-Y_t^1)\Bigl\{\bigl((a(Q_t)- b_2)^T\M(Q_t)(b_1- b_2) \\-Y_t^1(b_1-
b_2)^T\M(Q_t)(b_1- b_2)\bigr) dt + v(Q_t) d W_t\Bigr\},
\end{multline*}
where $\M(Q_t),\S(Q_t)$ are defined as in \eqref{matrices}, and $v(Q_t) =
\sqrt{(b_1-b_2)^T\S(Q_t)(b_1-b_2)}$.

Let $\theta_0(i)$, $\theta_1(i)$ denote the coefficients from
\eqref{d0}--\eqref{d1} corresponding to state $i$. Define
\[
\bar \theta_j = \E^\pi (\theta_j(Q_t)) := \sum_{i=1}^2 \theta_j(i) \pi_i, \qquad j=1,2,
\]
where $\E^\pi$ is the expectation with respect to the stationary
distribution of $Q_t$. Note that $\bar \theta_0 \ge \bar \theta_1$ by
Theorem~\ref{th-many-agents}.

\begin{theorem}
\label{th-switching}
Assume that $v(i)>0$ for $i=1,2$.
\begin{enumerate}[leftmargin=*,label=(\alph*)]
\item\label{th-switching-a} If $\bar \theta_1 > 0$, then agent 1 dominates.
\item\label{th-switching-b} if $\bar \theta_0>0$ and $\bar \theta_1<0$, then both of the agents survive, the
process $Y_t^1$ is positive recurrent, and $\limsup_{t\to\infty} Y_t^1 = 1$,
$\liminf_{t\to\infty} Y_t^1 = 0$.
\item\label{th-switching-c} If $\bar \theta_0 < 0$, then agent 1 vanishes.
\end{enumerate}
\end{theorem}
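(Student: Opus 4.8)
The plan is to reduce everything to the analysis of the single real-valued, regime-switching diffusion obtained from the logit transform, and then to treat the three cases by a law-of-large-numbers argument, a symmetry argument, and a Lyapunov-function argument, respectively. First I would pass to $Z_t = \ln(Y_t^1/(1-Y_t^1))$. Applying It\^o's formula to the switching SDE for $Y_t^1$ exactly as in the proof of Theorem~\ref{th-two-agents}, but with regime-dependent coefficients, gives
\[
dZ_t = f(Z_t,Q_t)\,dt + v(Q_t)\,dW_t, \qquad f(z,i) = \theta_0(i) + \frac{\theta_1(i)-\theta_0(i)}{1+e^{-z}}.
\]
Since $\theta_0(i)\ge\theta_1(i)$ for each $i$ (Theorem~\ref{th-many-agents}) and $1/(1+e^{-z})\in(0,1)$, the drift $f$ interpolates monotonically between $f(-\infty,i)=\theta_0(i)$ and $f(+\infty,i)=\theta_1(i)$; in particular $\theta_1(i)\le f(z,i)\le\theta_0(i)$. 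The pair $(Z_t,Q_t)$ is Markov on $\R\times\{1,2\}$, and, because $v(i)>0$ and the switching is irreducible ($g_{12},g_{21}>0$), it is a regular non-degenerate process that can reach any open set.

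For claim~\ref{th-switching-a} I would use the uniform bound $f(z,i)\ge\theta_1(i)$ to write $Z_t \ge Z_0 + \int_0^t \theta_1(Q_s)\,ds + \int_0^t v(Q_s)\,dW_s$. The martingale term has quadratic variation $\int_0^t v(Q_s)^2\,ds$, which satisfies $(\min_i v(i)^2)\,t\le\langle M\rangle_t\le(\max_i v(i)^2)\,t$, so the strong law of large numbers for martingales gives $t^{-1}\int_0^t v(Q_s)\,dW_s\to0$ a.s., while the ergodic theorem for $Q_t$ gives $t^{-1}\int_0^t\theta_1(Q_s)\,ds\to\bar\theta_1$. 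Hence $\liminf_{t\to\infty}t^{-1}Z_t\ge\bar\theta_1>0$, so $Z_t\to+\infty$ and $Y_t^1\to1$. Claim~\ref{th-switching-c} then follows by the agent symmetry already noted after~\ref{th-two-agents-Ic}: interchanging $b_1$ and $b_2$ turns the coefficients of agent~2 into $\tilde\theta_0(i)=-\theta_1(i)$, $\tilde\theta_1(i)=-\theta_0(i)$ (with $v(i)$ unchanged, as $v$ is a symmetric quadratic form in $b_1-b_2$), so averaging gives $\bar{\tilde\theta}_1=-\bar\theta_0$; thus $\bar\theta_0<0$ means $\bar{\tilde\theta}_1>0$, and claim~\ref{th-switching-a} applied to agent~2 yields $Y_t^2\to1$, i.e.\ agent~1 vanishes.

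The substantial case is~\ref{th-switching-b}, where I would establish positive recurrence of $(Z_t,Q_t)$ by Foster--Lyapunov and then deduce the oscillation of $Y_t^1$. Writing the generator as
\[
\mathcal L V(z,i) = f(z,i)V_z(z,i) + \tfrac12 v(i)^2 V_{zz}(z,i) + \sum_{j}g_{ij}\bigl(V(z,j)-V(z,i)\bigr),
\]
I would seek $V(z,i)=e^{\alpha z}\phi_i + e^{-\beta z}\psi_i$ with small $\alpha,\beta>0$ and componentwise positive $\phi,\psi$. Testing $e^{\alpha z}\phi_i$ near $+\infty$, where $f(z,i)\to\theta_1(i)$, reduces $\mathcal L V$ to $e^{\alpha z}\bigl(A(\alpha)\phi\bigr)_i$ with $A(\alpha)=\mathrm{diag}\bigl(\alpha\theta_1(i)+\tfrac12 v(i)^2\alpha^2\bigr)+G$. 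This matrix is Metzler and irreducible, so its Perron root $\Lambda(\alpha)$ is simple with positive eigenvector; since $\Lambda(0)=0$ has right eigenvector $\mathbf 1$ and left eigenvector $\pi$, first-order perturbation theory gives $\Lambda'(0)=\sum_i\pi_i\theta_1(i)=\bar\theta_1<0$, whence $\Lambda(\alpha)<0$ for small $\alpha>0$ and $\mathcal L(e^{\alpha z}\phi_i)<0$ for large $z$. The symmetric computation at $-\infty$ (where $f\to\theta_0(i)$) produces a Perron root with derivative $-\bar\theta_0<0$, furnishing $\beta,\psi$. Taking $\alpha,\beta\in(0,1)$ makes the errors $f(z,i)-\theta_1(i)=O(e^{-z})$, $f(z,i)-\theta_0(i)=O(e^{z})$ and the cross terms negligible, so that $V\to\infty$ as $|z|\to\infty$ and $\mathcal L V\le -c<0$ outside a compact set $[-z_0,z_0]\times\{1,2\}$. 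By the standard positive-recurrence criterion for switching diffusions this yields positive recurrence of $(Z_t,Q_t)$, hence of $Y_t^1$, and a unique stationary law $\nu$; non-degeneracy and irreducibility force $\nu$ to have full support, so $\nu\bigl((N,\infty)\times\{1,2\}\bigr)>0$ and $\nu\bigl((-\infty,-N)\times\{1,2\}\bigr)>0$ for every $N$, and ergodicity then gives $Z_t>N$ and $Z_t<-N$ for arbitrarily large $t$, i.e.\ $\limsup_{t\to\infty}Y_t^1=1$ and $\liminf_{t\to\infty}Y_t^1=0$ (in particular both agents survive).

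The main obstacle is the Lyapunov construction in~\ref{th-switching-b}: one must verify that the Perron eigenvalue of the tilted generator has derivative exactly $\bar\theta_1$ (resp.\ $-\bar\theta_0$) at $0$, which is precisely what converts the \emph{averaged} sign conditions into a genuine drift condition and is the point where the regime-switching structure, rather than a single diffusion, really enters; one must then invoke the correct hybrid-diffusion positive-recurrence theorem and pass from recurrence of $(Z_t,Q_t)$ to the boundary oscillation of $Y_t^1$. By contrast, the uniform bound $f\ge\theta_1$ renders claim~\ref{th-switching-a}, and by symmetry~\ref{th-switching-c}, essentially immediate.
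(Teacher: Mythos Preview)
Your proof is correct and, for parts~\ref{th-switching-a} and~\ref{th-switching-c}, essentially identical to the paper's: the same logit transform, the same drift bounds $\theta_1(i)\le f(z,i)\le\theta_0(i)$, and the same appeal to the ergodic theorem for $Q$ together with the strong law for the martingale part. (The paper argues~\ref{th-switching-c} directly from the upper bound $f\le\theta_0$ rather than via the agent symmetry, but this is the same computation.)

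For part~\ref{th-switching-b} both proofs use a Foster--Lyapunov argument, but the constructions differ. The paper works separately on each half-line: it first picks $z_1$ with $\E^\pi f(z_1,Q)>0$, then builds by hand a function $V(z,i)=(1\pm c\gamma)e^{-\gamma z}$ on $\{z\le z_1\}$, choosing the scalar $c$ so that two explicit linear inequalities hold and then taking $\gamma$ small; a symmetric construction handles $\{z\ge z_2\}$. Your approach replaces this ad hoc two-state computation by the Perron--Frobenius eigenproblem for the tilted generator $A(\alpha)=\mathrm{diag}(\alpha\theta_1(i)+\tfrac12\alpha^2 v(i)^2)+G$, using $\Lambda'(0)=\bar\theta_1$ to get $\Lambda(\alpha)<0$ for small $\alpha$, and takes a single global Lyapunov function $V(z,i)=e^{\alpha z}\phi_i+e^{-\beta z}\psi_i$. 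The two are really the same idea---the paper's vector $(1+c\gamma,\,1-c\gamma)$ is a first-order expansion of your Perron eigenvector---but your version is cleaner, generalizes immediately to any finite number of regimes, and makes transparent why the \emph{averaged} signs $\bar\theta_0>0$, $\bar\theta_1<0$ are exactly what is needed. Conversely, the paper's construction is more elementary in that it avoids any spectral theory. You also spell out the passage from positive recurrence to $\limsup Y_t^1=1$, $\liminf Y_t^1=0$ via full support of the stationary law, a step the paper leaves implicit.
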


\begin{proof}
Let $Z_t = \ln(Y_t^1/(1-Y_t^1))$. Similarly to \eqref{f}, define
\[
f(z,i) = \theta_0(i) +   \frac{\theta_1(i)-\theta_0(i)}{1+e^{-z}}.
\]
Then
\[
d Z_t = f(Z_t,Q_t)dt  + v(Q_t) d W_t.
\]
Note that always $\theta_1(i) \le f(z,i) \le \theta_0(i)$. In case
\ref{th-switching-a}, using the ergodicity of $Q_t$ and the strong law of
large numbers applied to the martingale $\int_0^t v(Q_s) d W_s$, we find
\[
\lim_{t\to\infty} \frac{Z_t}{t} \ge \E^\pi \theta_1(Q_t) = \bar \theta_1>0,
\]
hence $Z_t\to\infty$ and $Y_t^1 \to 1$ as $t\to\infty$. Similarly, in
case~\ref{th-switching-c} we have $Z_t\to-\infty$.

Consider case \ref{th-switching-b}. In order to prove the positive
recurrence it is enough to prove it for a set of the form $I\times\{1,2\}$,
where $I$ is an interval $(z_1,z_2) \subset \R$ (see, e.g.,
\citet[Sec.~3.2]{YinZhu09}). Since $\bar \theta_0>0$ and $\bar \theta_1 <
0$, we can find $z_1\le z_2$ such that
\[
\E ^\pi f(z_1, Q_t) >0, \qquad \E ^\pi f(z_2, Q_t) <0.
\]

Consider an initial condition $z_0 \le z_1$ and let $\tau = \inf\{t\ge 0:
Z_t = z_1\}$. We are going to find a function $V(z,i)$ such that $V(z,i)\ge
0$ and $L V(z,i) \le -u$ for some $u>0$ and all $z\le z_1$, $i=1,2$, where
$L$ is the infinitesimal generator of $(Z,Q)$:
\[
L V(z,i) = \frac{v^2(i)}{2} V''(z,i) + f(z,i) V'(z,i) + V(z,1) g_{i1} +
V(z,2) g_{i2}.
\]
Observe that it is possible to find $\epsilon>0$ such that for $c =
\frac{f(z_1,2)}{2g_{21}} - \epsilon $ we have
\begin{align}
&f(z_1,1) + 2cg_{12} > 0,\label{f-1}\\
&f(z_1,2) - 2c g_{21} > 0.\label{f-2}
\end{align}
Indeed, for $\epsilon=0$ we have the equality to zero in \eqref{f-2} and the
strict inequality in \eqref{f-1}, which follows from that $\E^\pi f(z_1, Q)
= (g_{12} + g_{21})^{-1} (f(z_1,1)g_{21} + f(z_1,2)g_{12}) > 0$. Hence such
$\epsilon$ an be found by continuity. Since the function $f(z,i)$ is
non-increasing in $z$, inequalities \eqref{f-1}--\eqref{f-2} hold for all
$z\le z_1$.

For $\gamma \in (0,|c|^{-1})$, consider the function $V(z,i)$ defined by
\[
V(z,1) = (1+c\gamma) e^{-\gamma z}, \qquad V(z,2) = (1-c\gamma)e^{-\gamma z}.
\]
We have
\begin{align*}
&LV(z,1) = \frac{v^2(i)}{2} \gamma^2(1+c\gamma) - \gamma(f(z,1)
(1+c\gamma) + 2c g_{12})e^{-\gamma z},\\
&LV(z,2) = \frac{v^2(i)}{2} \gamma^2(1-c\gamma) - \gamma(f(z,1)
(1-c\gamma) -2c g_{21})e^{-\gamma z}.
\end{align*}
In view of \eqref{f-1}--\eqref{f-2}, taking $\gamma$ sufficiently small, it
is possible to make $LV(z,1) \le -u $ and $LV(z,2) \le -u$ for some $u>0$
and all $z\le z_1$. By Ito's formula, under the initial condition $Z_0=z_0$
and $Q_0=i_0$, we find
\[
\E V(Z_{\tau\wedge t}, Q_{\tau\wedge t}) = V(z_0, i_0)  + \E \int_0^{\tau\wedge t} L V(Z_s,Q_s) ds \le
V(z_0,i_0) -u \E (\tau\wedge t).
\]
By the monotone convergence theorem applied with $t\to\infty$, we have $\E
\tau \le u^{-1}V(z_0,i_0)$. Hence the set $I\times \{1,2\}$ can be reached
from a point $z_0\le z_1$ in time with finite expectation. In a similar way,
we consider points $z_0 \ge z_2$, and establish the positive recurrence of
the set $I\times\{1,2\}$.
\end{proof}

\begin{remark}
Asymptotic behavior of a solution of a replicator equation with regime
switching was studied by \cite{Vlasic15}, although his assumptions are more
complicated than ours (see Assumption 4.1 of that paper).
\end{remark}

\renewcommand{\thesection}{Appendix}
\section{Conditions for convergence to a diffusion process}
\label{appendix}
This appendix states a result about convergence in distribution of
discrete-time processes with uniformly bounded jumps to a diffusion process
in a form convenient for our needs.

Consider a stochastic differential equation
\begin{equation}
d Y_t = f(Y_t) dt + g(Y_t)dW_t, \qquad Y_0=y_0 \in \R^M,\label{sde-weak}
\end{equation}
where $f\colon \R^M\to\R^M$, $g\colon \R^M \to \R^{M\times M}$ are
measurable functions and $W_t$ is a Brownian motion in $\R^M$ with
covariance matrix $\sigma = (\sigma_{mk})\in\R^{M\times M}$, i.e.\ $\E W_t^m
W_t^k = \sigma_{mk}t$. Assume that equation \eqref{sde-weak} has a unique
weak solution and denote by $\Law(Y)$ the distribution which it generates on
the Skorokhod space $D(\R_+; \R^M)$ of \cadlag\ function $\alpha\colon \R_+
\to \R^M$ (the support of this distribution lies in the subspace of
continuous functions).

Let $Y_t^{\delta}$ be piecewise-constant \cadlag\ processes in $\R^M$ with
the same initial values $Y_0^\delta = y_0$, which are indexed by a parameter
$\delta>0$. Assume that $Y_t^\delta$ is constant on intervals
$[i{\delta},(i+1){\delta})$, $i\ge 0$, with jumps $\|\Delta \tilde
Y_i^\delta\| \le c$ for some constant $c$ which is the same for all
$\delta$. Let $\F_t^\delta = \sigma(Y_s^\delta; s\le t)$ be the natural
filtration of $Y_t^\delta$. We are interested in conditions for the weak
convergence
\[
\Law(Y^\delta) \to \Law(Y), \qquad \delta\to0.
\]

To formulate these conditions, introduce the predictable processes
$B_t^\delta,C_t^\delta$ with values in $\R^M$ and $\R^{M\times M}$,
respectively, defined by
\begin{align}
\label{B-delta}
&B^{\delta,m}_t = \sum_{1\le i \le \floor{t/\delta}}
  \E(\Delta Y_{i\delta}^{\delta,m} \mid \F_{(i-1)\delta}^\delta), \\
\label{C-delta}
&C^{\delta,mk}_t = \sum_{1\le i \le \floor{t/\delta}}\biggl(
  \E(\Delta Y_{i\delta}^{\delta,m} \Delta Y_{i\delta}^{\delta,k} \mid \F_{(i-1)\delta}^\delta)
  -\E(\Delta Y_{i\delta}^{\delta,m}\mid \F_{(i-1)\delta}^\delta)
   \E(\Delta Y_{i\delta}^{\delta,k}\mid \F_{(i-1)\delta}^\delta) \biggr)\notag.
\end{align}
These processes are the first and modified second predictable
characteristics without truncation of $Y_t^\delta$, see
\citet[IX.3.25]{JacodShiryaev02}. Also, on the space $D(\R_+; \R^M)$ define
the functionals
\begin{equation}
B_t(\alpha) = \int_0^t f(\alpha_s) ds, \qquad C_t(\alpha) = \int_0^t
g(\alpha_s)\sigma g(\alpha_s)^T\, ds, \qquad \alpha \in D(\R_+;
\R^M),\label{B-C}
\end{equation}
so that $B_t(Y)$ and $C_t(Y)$ are the first and second predictable
characteristics of $Y_t$.

The next proposition follows from Theorem~IX.3.27 of \cite{JacodShiryaev02}.
\begin{proposition}
\label{th-convergence}
Suppose the following conditions hold true:
\begin{enumerate}[leftmargin=*,itemsep=0mm,label=(\alph*),widest=a]
\item\label{conv-1} equation \eqref{sde-weak} has a unique weak solution;
\item\label{conv-2} the functions $f(y)$ and $g(y)$ are continuous;
\item\label{conv-3} there exists a non-random continuous increasing function
$F(t)$ such that for any $\alpha \in D(\R_+;\R^M)$ the function
$F(t) - \int_0^t \sum_m |f^m(\alpha_s)|ds   - \mathrm{tr}( C_t(\alpha))$ is
increasing;
\item\label{conv-4} $\sup\limits_{s\le t} \|B_s^\delta -
B_s(Y^\delta) \| \to 0$ in probability as $\delta\to 0$ for all $t > 0$;
\item\label{conv-5} $C_t^\delta - C_t(Y^\delta) \to 0$ in probability as
$\delta\to 0$ for all $t > 0$;
\item\label{conv-6} $\sum_{1\le i \le \floor{t/\delta}} \E(h(\Delta
Y_{i\delta}^\delta)\mid \F_{(i-1)\delta}^\delta) \to 0$ in probability for
all $t > 0$ and any continuous bounded function $h(y)\colon \R^M \to \R$
which vanishes in a neighborhood of zero.
\end{enumerate}
Then $\Law(Y^\delta)$ weakly converges to $\Law(Y)$ as $\delta \to 0$.
\end{proposition}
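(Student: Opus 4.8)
The plan is to exhibit each $Y^\delta$ as a semimartingale, read off its predictable characteristics, and then recognize conditions \ref{conv-1}--\ref{conv-6} as exactly the hypotheses of \citet[Theorem~IX.3.27]{JacodShiryaev02}, whose conclusion is the asserted weak convergence. Since $Y^\delta$ is piecewise constant and \cadlag, it is a pure-jump semimartingale whose jumps occur only at the grid times $i\delta$ and equal the increments $\Delta Y_{i\delta}^\delta$, which by hypothesis are uniformly bounded by $c$. Fix a truncation function $h_0$ coinciding with the identity on $\{\|x\|\le c\}$. With respect to $h_0$ no jump is truncated, so the first characteristic of $Y^\delta$ is exactly $B^\delta$ from \eqref{B-delta}, the modified second characteristic $\widetilde C^\delta$ is exactly $C^\delta$ from \eqref{C-delta}, and the third characteristic is the predictable jump measure $\nu^\delta$ of $Y^\delta$. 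The limit $Y$ is a continuous diffusion, hence a semimartingale with characteristics $(B(Y),C(Y),0)$ given by \eqref{B-C} and with vanishing jump measure.

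Next I would match the six hypotheses to those of the cited theorem. Condition \ref{conv-1} is precisely the well-posedness of the martingale problem attached to the continuous characteristics $(B,C,0)$: existence and uniqueness of a weak solution of \eqref{sde-weak} is equivalent to existence and uniqueness in law for this problem, which the Jacod--Shiryaev theorem uses to \emph{identify} any subsequential limit. Condition \ref{conv-2}, continuity of $f$ and $g$, guarantees that the functionals $\alpha\mapsto B_t(\alpha)$ and $\alpha\mapsto C_t(\alpha)$ from \eqref{B-C} are continuous at \cadlag\ paths whose limit is continuous, so the characteristics vary continuously along approximating trajectories. Condition \ref{conv-3} is the strong domination hypothesis: the deterministic continuous increasing $F(t)$ dominating $\int_0^t\sum_m|f^m|\,ds+\mathrm{tr}(C_t)$ supplies the uniform control that makes the family $\{\Law(Y^\delta)\}$ tight, and, once the jumps are shown to vanish, C-tight.

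The convergence of the characteristics is then supplied by the remaining three conditions. Condition \ref{conv-4} is the uniform (sup-norm) convergence of the first characteristic $B^\delta$ to the functional $B(\cdot)$ evaluated \emph{along the discrete path} $Y^\delta$ itself, rather than along the limit, which is the correct first-characteristic hypothesis in \citet{JacodShiryaev02}; condition \ref{conv-5} is the analogous convergence $C^\delta-C(Y^\delta)\to0$ of the modified second characteristic; and condition \ref{conv-6} is the Lindeberg-type requirement that the compensated mass of jumps lying outside any neighborhood of the origin vanishes, which drives the third characteristic to $0$ and hence forces the limit to be continuous. With $h_0$ as above, the objects $B^\delta$, $C^\delta$ and the jump-test sums in \ref{conv-6} are literally the characteristics of $Y^\delta$ and its integrals against $\nu^\delta$ appearing in the theorem, so the hypotheses align without reformulation, and \citet[Theorem~IX.3.27]{JacodShiryaev02} yields $\Law(Y^\delta)\to\Law(Y)$.

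The main obstacle is bookkeeping within the abstract framework rather than any new estimate. One must check that, with the chosen truncation, the predictable characteristics of the step process $Y^\delta$ coincide \emph{exactly} with $B^\delta$, $C^\delta$ and $\nu^\delta$ as written, so that \ref{conv-4}--\ref{conv-6} are stated for the right objects; and one must verify that the domination \ref{conv-3} (tightness), the vanishing jumps \ref{conv-6} (C-tightness and continuity of the limit), the characteristic convergences \ref{conv-4}--\ref{conv-5} (identification of every limit point as a solution of the martingale problem), and the uniqueness \ref{conv-1} (a single possible limit law) combine correctly to give convergence of the full sequence. A last point requiring care is that evaluating the continuous functionals $B,C$ at the piecewise-constant approximations and passing $\delta\to0$ is legitimate; this is exactly where the continuity of \ref{conv-2} and the domination of \ref{conv-3} enter.
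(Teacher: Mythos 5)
Your proposal is correct and takes essentially the same route as the paper, which proves this proposition simply by invoking \citet[Theorem~IX.3.27]{JacodShiryaev02}: your choice of a truncation function equal to the identity on $\{\|x\|\le c\}$ (so that, the jumps being uniformly bounded, $B^\delta$ and $C^\delta$ are literally the first and modified second characteristics of $Y^\delta$) matches the paper's own remark following \eqref{C-delta}, and your mapping of conditions \ref{conv-1}--\ref{conv-6} onto local uniqueness of the limiting martingale problem, continuity of the limit characteristics, the strong majoration hypothesis, and the convergences $[\sup\text{-}\beta]$, $[\gamma]$ and the Lindeberg-type jump condition is exactly the bookkeeping that citation presupposes. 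The only point worth noting is that the cited theorem formally requires \emph{local uniqueness} of the martingale problem, which for an SDE with continuous coefficients follows from weak existence and uniqueness as assumed in \ref{conv-1}, a gloss the paper itself makes silently.
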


\small 
\setlength{\bibsep}{0.2em plus 0.3em}
\bibliographystyle{apalike}
\bibliography{approx}

\end{document}